\newtheorem{theorem}{Theorem}[section]
\theoremstyle{definition}
\newtheorem{remark}[theorem]{Remark} 
\numberwithin{equation}{section}
\def\RE{\mathbb R}
\def\R{\mathbb R}
\def\CO{\mathbb C}
\def\C{\mathbb C}
\def\H{\mathsf H}
\def\D{\mathsf D}
\def\oneb{\mathbb 1}
\def\uno{\mathbb{1}}
\def\zero{\mathbb{0}}
\def\sgn{\text{sgn}}
\def\ran{\text{\rm ran}}
\def\ker{\text{\rm ker}}
\def\dom{\text{\rm dom}}
\def\be{\begin{equation}}
\def\ee{\end{equation}}
\title[D${\,}^{\!2}$=H+1/4 with point interactions]
{ D${\,}^{\!2}$=H+$\frac14$ with point interactions}
\begin{document}
\author{Andrea Posilicano}
\author{Linda Reginato}
\address{DiSAT, Sezione di Matematica, Universit\`a dell'Insubria, via Valleggio 11, I-22100
Como, Italy}
\email{andrea.posilicano@unisubria.it}
\email{linda.reginato98@gmail.com}
\begin{abstract} Let ${\mathsf D}$ and ${\mathsf H}$ be the self-adjoint, one-dimensional Dirac and Schr\"odinger operators in $L^{2}(\mathbb{R};\mathbb{C}^{2})$ and $L^{2}(\mathbb{R};\mathbb{C})$ respectively. It is well known that, in absence of an external potential, the two operators are related through the equality ${\mathsf D}^2 = ({\mathsf H} + \frac{1}{4}){\mathbb 1}$. We show that such a kind of relation also holds in the case of $n$-point singular perturbations: given any  self-adjoint realization $\widehat {\mathsf D}$ of the formal sum ${\mathsf D}+\sum_{k=1}^{n}\gamma_{k}\delta_{y_{k}}$, we explicitly determine the self-adjoint realization $\widehat{\mathsf H}$ of 
${\mathsf H}{\mathbb 1}+\sum_{k=1}^{n}(\alpha_{k}\delta_{y_{k}}+\beta_{k}\delta'_{y_{k}})$ such that ${\widehat{\mathsf D}}^2 = \widehat{\mathsf H} + \frac{{\mathbb 1}}{4}$. The found correspondence preserves the subclasses of self-adjoint realizations corresponding to both the local and the separating boundary conditions. Some connections with supersymmetry are provided. The case of nonlocal boundary conditions allows the study of the relation ${\mathsf D}^{2}={\mathsf H}+\frac14$  for quantum graphs with (at most) two ends; in particular, the square of the extension corresponding to Kirchhoff-type boundary conditions for the Dirac operator on the graph gives the direct sum of two Schr\"odinger operators on the same graph, one with the usual Kirchhoff boundary conditions and the other with a sort of reversed Kirchhoff ones.
\end{abstract}
\maketitle
\section{Introduction}\label{S1}
Let $L^2(\R;\C^d)$ be the Hilbert space of $\C^{d}$-valued square integrable functions with scalar product $\langle\cdot, \cdot\rangle$ and norm $\|\cdot \|$; likewise, $H^2(\mathbb{R};\mathbb{C}^{d})\subset H^1(\mathbb{R};\mathbb{C}^{d})\subset C_{b}(\mathbb{R};\mathbb{C}^{d})$ denote the Sobolev space on $\R$ of order 1 and 2 and the space of bounded continuous functions  with values in $\C^d$ respectively. Whenever $d=1$, we simply write $L^2(\R)$, $H^k(\mathbb{R})$ and $ C_{b}(\mathbb{R})$. 
In $L^{2}(\RE;\CO^{2})$	we consider the free self-adjoint Dirac operator $\D$ defined by
\[
\D: H^1(\mathbb{R};\mathbb{C}^{2})\subseteq L^2(\mathbb{R};\mathbb{C}^{2})\to L^2(\mathbb{R};\mathbb{C}^{2})\,,\quad 
\D := -i\, \frac{d\,}{d x} \,\sigma_1 + \frac{1}{2}\, \sigma_3\,,
\]
where  $\sigma_{1}$ and $\sigma _{3}$ are the Pauli matrices
\[
		\sigma_ 1=
		\begin{bmatrix}
			0 & 1\\
			1 & 0\\
		\end{bmatrix},
	\qquad 
		\sigma_ 3=
		\begin{bmatrix}
			1 & 0\\
			0 & -1\\
		\end{bmatrix}.
\]
Furthermore,  we consider the free self-adjoint Schr\"odinger operator in $L^{2}(\RE)$
\[
\H :H^2(\mathbb{R}) \subseteq L^2(\mathbb{R})\to L^2(\mathbb{R}), \quad \H :=-\frac{d^{2}}{d x^{2}}\,.
\]
It is well known and easy to check that in this free case there exists a relation between the two operators:
\begin{align}\label{Rel}
		\D^2 = \left(\H + \frac{1}{4}\right)\!\oneb\,.
\end{align}
Here and below, we use the isomorphism $L^{2}(\RE;\CO^{2})\simeq L^{2}(\RE)\oplus L^{2}(\RE)$ and the identification ${\mathsf L}\oneb\equiv{\mathsf L}\oplus{\mathsf L}$, ${\mathsf L}$ a linear operator in $L^{2}(\RE)$. More generally, in the following we use the shorthand notation ${\mathsf L}\oneb\equiv{\mathsf L}\oplus{\mathsf L}$ for a linear operator $L:\dom(L)\subseteq H_{1}\to H_{2}$.
\par
Notice that \eqref{Rel} entails a relation between the resolvent operators:
\be\label{res}
(-\D+z)^{-1}=(\D+z)\left(-\H+z^{2}-\frac14\right)^{\!\!-1}\!\!\!\oneb\,,\qquad z\in \C \backslash ((-\infty, -{1}/{2}] \cup [{1}/{2}, +\infty))\,.
\ee	
The aim of this paper is to extend this connection between Dirac's and Schr\"odinger's operators to the case where $\D$ is perturbed by a sum of $\delta$'s potential, equivalently, given any self-adjoint extension $\D_{\Pi,\Theta}$ of the symmetric operator $\D|C^{\infty}_{comp}(\RE\backslash  \{y_{1},\dots,y_{n}\};\CO^{2})$, we explicitly determine  the couple $(\widehat\Pi,\widehat\Theta)$ such that    
\be\label{intro1}
(\D_{\Pi,\Theta})^{2}=\left(\widehat\H_{\widehat\Pi,\widehat\Theta}+\frac{\uno}4\right)\,.
\ee
Here, we parametrize the self-adjoint extensions of  $\D|C^{\infty}_{comp}(\RE\backslash  \{y_{1},\dots,y_{n}\};\CO^{2})$ by couples $(\Pi,\Theta)$, $\Pi:\CO^{2n}\to\CO^{2n}$ an orthogonal projector, $\Theta:\ran(\Pi)\to\ran(\Pi)$ a symmetric operator, and likewise $\widehat\H_{\widehat\Pi,\widehat\Theta}$ denotes the self-adjoint extension of $\H\uno|C^{\infty}_{comp}(\RE\backslash  \{y_{1},\dots,y_{n}\};\CO^{2})$ corresponding to the couple $(\widehat\Pi,\widehat\Theta)$,  $\widehat\Pi:\CO^{4n}\to\widehat\CO^{4n}$ an orthogonal projector,  $\widehat\Theta:\ran(\widehat\Pi)\to\ran(\widehat\Pi)$ a symmetric operator. Any operator of the kind $\widehat\H_{\widehat\Pi,\widehat\Theta}$ is a self-adjoint realization of  a singular perturbation of $\H\uno$ by a sum of $\delta$'s and $\delta'$'s potentials. As in the free case, the relation \eqref{intro1} entails another one for the resolvents:
$$
(-\D_{\Pi,\Theta}+z)^{-1}=(\D_{\Pi,\Theta}+z)\left(-\widehat\H_{\widehat\Pi,\widehat\Theta}+z^{2}-\frac{\uno}4\right)^{\!\!-1}\,,
$$ 
where $\pm z\in\varrho(\D_{\Pi,\Theta})$ if and only if $(z^{2}-\frac14)\in\varrho(\widehat\H_{\widehat\Pi,\widehat\Theta})$; here, $\varrho(L)$ denotes the resolvent set of the closed operator $L$. \par 
The specific case here considered is an example of solution of the problem concerning the representation of the square of a singular perturbation of a self-adjoint operator $A$  by a singular perturbation of $A^{2}$. This problem has been studied in \cite{AKK}; however, in such a paper only the case $A>0$ has been considered and the explicit examples there presented are limited to rank-one singular perturbations. The methods here used are different from the ones in \cite{AKK}, we do not use the resolvent  formulae directly but instead use the self-adjointness domains.\par
In more detail, the content of the paper  is the following. In Section \ref{S2} we build the whole families of the self-adjoint extensions of $\D|C^{\infty}_{comp}(\RE\backslash  \{y_{1},\dots,y_{n}\};\CO^{2})$ and $\H\uno|C^{\infty}_{comp}(\RE\backslash  \{y_{1},\dots,y_{n}\};\CO^{2})$. Instead of using the standard von Neumann theory (see, e.g., \cite{AGHKH}, \cite{BD}, \cite{DG}), which gives a parametrization in terms of unitary operators between the defect spaces, we found more convenient to use the equivalent approach proposed in \cite{JFA} and \cite{O&M}, which gives a parametrization in terms of couples $(\Pi,\Theta)$, where $\Pi$ is an orthogonal projection and $\Theta$ is a self-adjoint operator  in $\ran(\Pi)$; this allows for an easy writing of the corresponding resolvents. Then, in Section \ref{S3}, by a comparison of the self-adjointness domains,  we found the correspondence between the couple $(\Pi,\Theta)$ and $(\widehat\Pi,\widehat\Theta)$ such that \eqref{intro1} holds. In order to enhance the reader intuition, we start with  simplest case, where  $n = 1$ and $\Pi = \uno$ and then proceed step-by-step towards the most general case. Finally, in Section \ref{S4}, we present various applications. In Subsection \ref{LBC} we consider the subclass of self-adjoint extensions for the Dirac operator corresponding to local boundary conditions, i.e., to the ones which do not couple different points $y_{k}$ and show that the corresponding extensions for the Schr\"odinger operator provide local boundary conditions as well. As a particular case of  such a  result, in Subsection \ref{GS} we consider the Gesztesy-\v{S}eba realizations; they are the self-adjoint realizations of the Dirac operator with local point interactions corresponding, in the non relativistic limit, to Schr\"odinger operators with local point interactions either of $\delta$-type or of $\delta'$-type (see \cite{GS}, \cite[Appendix J]{AGHKH}, \cite{CMP}). Then, in Subsection 
\ref{SBC}, we consider the subclass of self-adjoint extensions for the Dirac operator corresponding to separating (a.k.a. decoupling) boundary conditions, i.e., to the local ones for which, at any point, left limits are independent from right limits. This entails that the corresponding Dirac operator is the direct sum of self-adjoint Dirac operators $\D_{k}$ in $L^{2}(I_{k})$, where the $I_{k}$'s are either the half-lines $(-\infty, y_{1})$ and $(y_{n},+\infty)$ or the bounded intervals $(y_{k}, y_{k+1})$; the same is true for the corresponding corresponding Schr\"odinger operator and $(\D_{k})^{2}=\widehat\H_{k}+\frac{\uno}4$. In Subsection \ref{SUSY}, some connections with supersymmetry are discussed and a simple criterion of spontaneous supersymmetry breaking is provided (see \cite{UT}, \cite{ASSW} and references therein for somehow different aspects of supersymmetry in presence of point interactions). In Subsection \ref{QG}, we point out that our results, in the case of non local boundary conditions, allow the study of the connection between the square of the Dirac operator and the Schr\"odinger operator on quantum graphs with (at most) two ends. In particular, as an explicit example, we consider the Dirac operator on the eye graph with Kirchhoff-type boundary conditions at the vertices and show that its square is the direct sum of two Schr\"odinger operators on the same graph, one with Kirchhoff boundary conditions and the other with a sort of inverse Kirchhoff ones. These latter boundary conditions, like the Kirchhoff ones, reduce, in the case of the real line, to the free boundary conditions; this is consistent with \eqref{Rel}. The procedure used for the eye graph can be extended, without substantial changes, to any kind of graph, thus showing that  the property of conservation of Kirchhoff-like boundary conditions holds in general. 
\par 
We presume that the results  here presented can be extended to the more  involved cases corresponding 
to extensions of symmetric operators with infinite deficiency indices as the 1-dimensional Dirac and Schr\"odinger operators with singular perturbations on discrete sets (see \cite{KM} and \cite{CMP}) and the $n$-dimensional ($n=2,3$) Dirac and Schr\"odinger operators with singular perturbations on 1-codimensional surfaces (see, e.g.,  \cite{ArMaVe 1}, \cite{BEHL19}, \cite{CLMT} and \cite{BLL}, \cite{JDE16}). 

\section{$\D$ and $\H$ with point interactions}\label{S2}
Given a finite set of points $Y = \{y_1, \cdots, y_n\}$, $y_{1}<y_{2},\dots<y_{n}$, we define 
\be\label{Hk}
	H^1(\RE\backslash Y;\CO^{d}) := H^1(I_{0};\CO^{d})\oplus\cdots\oplus H^1(I_{n};\CO^{d})\,,
	\ee
	where, 
	\be\label{Ik}
	I_{0}:=(-\infty,y_1)\,,\quad I_{1}:=(y_{1}, y_{2})\,,\quad  \dots\dots\quad  I_{n-1}:=(y_{n-1},y_{n})\,,\quad I_{n}:=(y_{n},+\infty)\,,
	\ee
	and
	\begin{equation*}
			H^1(I_{j};\CO^{d}) := \{ f \in L^2(I_{j};\CO^{d}): f^\prime \in L^2(I_{j};\CO^{d})\}\,,\quad j=0,\dots,n\,.
		\end{equation*}
Here and below, $f'$ denotes the (distributional) derivative of $f$. Notice that the left and right limits $f(y_{k}^{\pm})$ exists and are finite for any $f\in H^1(\RE\backslash Y;\CO^{d})$.
	We define 
	$$
	H^2(\RE\backslash Y;\CO^{d}) := H^2(I_{0};\CO^{d})\oplus \cdots\oplus H^2(I_{n};\CO^{d})\,,
	$$ 
where 
	\begin{equation*}
			H^2(I_{j};\CO^{d}) := \{ f \in H^1(I_{j};\CO^{d}): f''\in L^2(I_{j};\CO^{d})\}\,,\quad j=0,\dots,n\,.
		\end{equation*}
		Obviously, $$H^2(\RE\backslash Y;\CO^{d})\subset H^1(\RE\backslash Y;\CO^{d})\subset L^{2}(\RE;\CO^{d})$$ and $f\in H^2(\RE\backslash Y;\CO^{d})$ implies $f'\in H^1(\RE\backslash Y;\CO^{d})$. We simply write $H^k(\RE\backslash Y)$, $k=1,2$, whenever $d=1$. Next, we introduce the two bounded operators 
		\begin{align}
			\tau:H^1(\RE\backslash Y;\CO^{2})\to\CO^{2n}\,,\quad \tau \Psi:=(\tau_{y_{1}}\Psi\,,\dots,\tau_{y_{n}}\Psi)\,,\qquad \tau_{y}\Psi:=\langle \Psi\rangle_{y}\,,
			\label{tau}
		\end{align}
	and	
	\begin{align}
		\widehat\tau:H^2(\RE\backslash Y)\to\CO^{2n}\,,\qquad \widehat\tau \psi:=
		(\widehat\tau_{y_{1}}\psi\,,\dots,\widehat\tau_{y_{n}}\psi)\,,\qquad \widehat\tau_{y}\psi:=\langle \psi\rangle_{y}\oplus\langle \psi'\rangle_{y}\,,
		\label{hattau}
	\end{align}
		where 
	$$	\langle f\rangle_{y}:=\frac12\,\big(f(y^{-})+f(y^{+})\big)\,.
	$$
Clearly, $\langle f\rangle_{y_{k}}=f(y_{k})$ whenever $f\in H^{1}(\RE;\CO^{d})\subset C_{b}(\RE;\CO^{d})$.\par
In this section, following the scheme proposed in \cite{O&M} (for the equivalent approaches which use either von Neuman's theory or Boundary Triples theory, see, e.g.,  \cite{DG}, \cite{BD} and \cite[Sect. 4.1]{Pank}, \cite{KM}, \cite{CMP} respectively), we review the construction of the self-adjoint extensions of the closed symmetric operators 
$$S:=\D{|\ker(\tau|H^{1}(\RE;\CO^{2}))}\,,\qquad \widehat S:=\H{|\ker(\widehat\tau|H^{2}(\RE))}\,.
$$
Both $S$ and $\widehat S$ have defect indices $(2n,2n)$; they are the closures of the symmetric operators 
$$S^{\circ}:=\D{|C^{\infty}_{comp}(\RE\backslash Y;\CO^{2})}\,,\qquad \widehat S^{\circ}:=\H{|C^{\infty}_{comp}(\RE\backslash Y)}\,.
$$
Let $\widehat g_{z}(x-y)$ be the kernel of the free Schr\"odinger resolvent $(-\H+ z)^{-1}=\left(\frac{d^{2}}{d x^{2}}+ z\right)^{-1}$,  with $z\in \varrho(\H)=\CO\backslash[0,+\infty)$, i.e.,
\begin{align}
	\widehat g_{z}(x)= \frac{e^{i\sqrt{z}\,|x|}}{2i\sqrt z}\,,\qquad \text{Im}(\sqrt z)>0\,.
	\label{hatg}
\end{align}
By \eqref{res}, setting $w_{z}:=z^{2}-\frac14$, one then obtains the kernel $g_{z}(x-y)$ of the free Dirac resolvent $(-\D+z)^{-1}$, $z\in\varrho(\D)=\C \backslash ((-\infty, -{1}/{2}] \cup [{1}/{2}, +\infty))$,
\begin{align}
	g_{z}(x)=(\D+z)\widehat g_{w_{z}}\oneb=\frac{e^{i\sqrt{w_{z}}\,|x|}}{2i} \begin{bmatrix}	\zeta_{z} & \sgn(x)\\	\sgn(x) & \zeta_{z}^{-1}\\\end{bmatrix},
	\label{g}
\end{align}
	where $\zeta_{z} := ({\frac{1}{2}-z})/{\sqrt{w_{z}}}$ and $\text{Im}(w_{z})>0$. 
By such kernels, one gets that the bounded operators 
$$
G_{z}:\CO^{2n}\to L^{2}(\RE;\CO^{2})\,,\quad G_{z}:=(\tau(-\D+\bar z)^{-1})^{*}\,,\quad z\in
\C \backslash ((-\infty, -{1}/{2}] \cup [{1}/{2}, +\infty))\,,
$$	
and 
$$
\widehat G_{z}:\CO^{2n}\to L^{2}(\RE)\,,\quad \widehat G_{z}:=(\widehat\tau(-\H+\bar z)^{-1})^{*}\,,\quad z\in\CO\backslash[0,+\infty)\,,
$$	
represents as 
$$
[G_{z}\xi](x)=\sum_{k=1}^{n}g_{z}(y_{k}-x)\,\xi_{k}\,,\qquad \xi\equiv(\xi_{1},\dots,\xi_{n})\,,\quad \xi_{k}\in\CO^{2}\,.
$$
and 
$$
[\widehat G_{z}\xi](x)=\sum_{k=1}^{n}(\widehat g_{z}(y_{k}-x)\,\xi_{k,1} +\widehat g^{\,\prime}_{z}(y_{k}-x)\,\xi_{k,2})\,,\quad \xi\equiv((\xi_{1,1},\xi_{1,2}),\dots,(\xi_{n,1},\xi_{n,2}))\,.
$$
Their adjoints 
$$
G^{*}_{\bar z}:L^{2}(\RE;\CO^{2})\to \CO^{2n} \,,\qquad
\widehat G^{*}_{\bar z}: L^{2}(\RE)\to\CO^{2n}
$$	
are given by  
$$
G^{*}_{\bar z}\Psi=\big((G^{*}_{z}\Psi)_{1},\dots,(G^{*}_{z}\Psi)_{n}\big)\,,\qquad 
(G^{*}_{\bar z}\Psi)_{k}:=\int_{\RE}g_{z}(y_{k}-x)\Psi(x)\,dx
$$
and 
$$
\widehat G^{*}_{\bar z}\psi=\big((\widehat G^{*}_{z}\Psi)_{1},\dots,(\widehat G^{*}_{z}\Psi)_{n}\big)\,,\qquad 
(\widehat G^{*}_{\bar z}\psi)_{k}:=\left(\int_{\RE}\widehat g_{z}(y_{k}-x)\psi(x)\,dx,\int_{\RE}\widehat g^{\,\prime}_{z}(y_{k}-x)\psi(x)\,dx\right).
$$
Since 
$$
\text{$G_{z}\xi\in H^{1}(\RE\backslash Y;\CO^{2})\quad$ and $\quad\widehat G_{z}\xi\in H^{2}(\RE\backslash Y)$,}
$$ both 
$$
\text{$\tau G_{z}:\CO^{2n}\to \CO^{2n}\quad$ and $\quad\widehat\tau \widehat G_{z}:\CO^{2n}\to \CO^{2n}$}
$$ 
are well defined and are represented by the two $n\times n$ block matrices  with the $2\times 2$ blocks
\be\label{Djk}
[\tau G_{z}]_{jk}=\frac{e^{i\sqrt{w_{z}}\,|y_{k}-y_{j}|}}{2i} \begin{bmatrix}	\zeta_{z} & \sgn(y_{k}-y_{j})\\	\sgn(y_{k}-y_{j}) & \zeta_{z}^{-1}\\\end{bmatrix},
\ee
\be\label{Hjk}
[\widehat\tau \widehat G_{z}]_{jk}=\frac{e^{i\sqrt{z}\,|y_{k}-y_{j}|}}{2} \begin{bmatrix}	(i\sqrt z)^{-1} & \sgn(y_{k}-y_{j})\\	-\sgn(y_{k}-y_{j}) & i\sqrt{z}\end{bmatrix},
\ee
where
$$
\sgn(x):=\begin{cases}-1&x<0\\0&x=0\\+1&x>0\,.\end{cases}
$$
In the following, given an orthogonal projection $P:\CO^{d}\to\CO^{d}$, 
by a slight abuse of notation, we use the same symbol to denote both the surjection $P:\CO^{d}\to\ran(P)$ and the injection $P:\ran(P)\to\CO^{d}$.
\begin{theorem}\label{K1}
The sets of self-adjoint extensions of $S$ and $\widehat S$ are both parametrized by couples $(\Pi,\Theta)$, where  $\Pi:\CO^{2n}\to\CO^{2n}$ is an orthogonal projector and $\Theta:\ran(\Pi)\to\ran(\Pi)$ is symmetric. The extensions $\D_{\Pi,\Theta}$ and $\H_{\Pi,\Theta}$ have resolvents  
$$
(-\D_{\Pi,\Theta}+z)^{-1}=(-\D+z)^{-1}+G_{z}\Pi(\Theta-\Pi\,\tau G_{z}\Pi)^{-1}\Pi G_{\bar z}^{*}\,,\qquad z\in\varrho(\D_{\Pi,\Theta})\cap\varrho(\D)
$$  
$$
(-\H_{\Pi,\Theta}+z)^{-1}=(-\H+z)^{-1}+\widehat G_{z}\Pi(\Theta-\Pi\,\widehat \tau \widehat G_{z}\Pi)^{-1}\Pi \widehat G_{\bar z}^{*}\,,\qquad z\in\varrho(\H_{\Pi,\Theta})\cap\varrho(\H)\,.
$$
Moreover,
$$
\dom(\D_{\Pi,\Theta})=\{\Psi\in L^{2}(\RE ;\CO^{2}):\Psi=\Psi_{z}+G_{z}\xi\,,\ \Psi_{z}\in H^{1}(\RE ;\CO^{2})\,,\ \xi\in\ran(\Pi)\,,\ \Pi\tau\Psi=\Theta\xi\}
$$
$$
(-\D_{\Pi,\Theta}+z)\Psi=(-\D+z)\Psi_{z}\,,
$$
$$
\dom(\H_{\Pi,\Theta})=\{\psi\in L^{2}(\RE ):\psi=\psi_{z}+\widehat G_{z}\xi\,,\ \psi_{z}\in H^{2}(\RE )\,,\ \xi\in\ran(\Pi)\,,\ \Pi\widehat \tau\psi=\Theta\xi\},
$$
$$
(-\H_{\Pi,\Theta}+z)\psi=(-\H+z)\psi_{z}\,;
$$
such representations are $z$-independent and the decompositions of $\Psi$ in $\dom(\D_{\Pi,\Theta})$ and of $\psi$ in $
\dom(\H_{\Pi,\Theta})$ are unique.
\end{theorem}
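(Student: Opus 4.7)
The plan is to invoke the general self-adjoint extension machinery of \cite{O&M, JFA}, which, given a closed symmetric operator $A$ of deficiency $(m,m)$ together with a bounded surjective trace map $\tau_{A}:\dom(A^{*})\to\CO^{m}$ whose kernel is the domain of a distinguished self-adjoint reference extension $A_{0}$, and the associated $\gamma$-field $G_{z}:=(\tau_{A}(-A_{0}+\bar z)^{-1})^{*}$, parametrises the whole family of self-adjoint extensions of $A$ by pairs $(\Pi,\Theta)$ through a Kre\u\i n-type resolvent formula of precisely the shape stated in the theorem. The task therefore reduces to recognising these abstract data in the two concrete settings: for the Dirac case, take $A_{0}=\D$ and the trace $\tau$ of \eqref{tau} extended by continuity to $\dom(S^{*})=H^{1}(\RE\backslash Y;\CO^{2})$; for the Schr\"odinger case, take $A_{0}=\H\uno$ and $\widehat\tau$ of \eqref{hattau} on $\dom(\widehat S^{*})=H^{2}(\RE\backslash Y)$; in both cases $m=2n$.

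The verification of the hypotheses of the abstract theorem is essentially routine. Boundedness of $\tau$ and $\widehat\tau$ follows from the Sobolev embedding $H^{k}(I_{j})\hookrightarrow C_{b}(\overline{I_{j}})$; surjectivity is produced by a standard cut-off argument giving, for any prescribed vector of boundary data, a compactly supported smooth function on $\RE\backslash Y$ realising it; the defect indices $(2n,2n)$ of $S$ and $\widehat S$ then follow by a dimension count in $\ker(S^{*}\mp i)$. The identification of the abstract $\gamma$-field with the $G_{z}$ and $\widehat G_{z}$ of the text, and the explicit integral formulas for these operators, are duality computations using the explicit kernels \eqref{g} and \eqref{hatg} of the free resolvents (with the Dirac Green's function read off from the Schr\"odinger one via \eqref{res}); evaluating at the points $y_{j}$ then yields the block matrices \eqref{Djk} and \eqref{Hjk} for the Weyl-type maps $\tau G_{z}$ and $\widehat\tau\widehat G_{z}$, from which one also reads off the Hermitian symmetry $(\Pi\tau G_{z}\Pi)^{*}=\Pi\tau G_{\bar z}\Pi$ (and its Schr\"odinger analogue) needed for $\Theta-\Pi\tau G_{z}\Pi$ to be an operator on $\ran(\Pi)$ whose invertibility is generic in $z$.

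Once these ingredients are in place, the resolvent formulas are just the translation of the abstract Kre\u\i n identity. For the domain description I would apply the resolvent identity to an arbitrary $\phi\in L^{2}$: setting $\Psi=(-\D_{\Pi,\Theta}+z)^{-1}\phi$, $\Psi_{z}:=(-\D+z)^{-1}\phi\in H^{1}(\RE;\CO^{2})$ and $\xi:=(\Theta-\Pi\tau G_{z}\Pi)^{-1}\Pi G_{\bar z}^{*}\phi\in\ran(\Pi)$, the defining equation for $\xi$ rewrites equivalently as the boundary condition $\Pi\tau\Psi=\Theta\xi$, whence $(-\D_{\Pi,\Theta}+z)\Psi=(-\D+z)\Psi_{z}$; the Schr\"odinger case is entirely parallel. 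Finally, the $z$-independence of the decomposition reduces to $G_{z}\xi-G_{w}\xi\in H^{1}(\RE;\CO^{2})$ for $z,w\in\varrho(\D)$, which one checks by observing that the jumps of $G_{z}\xi$ at each $y_{k}$ depend only on $\xi$, not on $z$; uniqueness reduces to showing that $G_{z}\xi\in H^{1}(\RE;\CO^{2})$ forces $\xi=0$. This last injectivity is the step I expect to need the most care: it is obtained by examining the jumps of $G_{z}\xi$ at each $y_{k}$ using \eqref{g} and observing that the corresponding jump matrix is invertible, so that the vanishing of all jumps forces $\xi=0$.
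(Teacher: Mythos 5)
Your proposal is correct and follows essentially the same route as the paper: the parametrization and resolvent formulas are obtained by invoking the abstract extension scheme of \cite{O&M} (with $\Gamma_{\Pi,\Theta}(z)=\Theta-\Pi\tau G_{z}\Pi$), and the domain description is then read off from the resolvent formula by setting $\xi:=(\Theta-\Pi\tau G_{z}\Pi)^{-1}\Pi\tau\Psi_{z}$ and checking that this is equivalent to $\Pi\tau\Psi=\Theta\xi$. Your extra remarks on $z$-independence and uniqueness via the invertibility of the jump matrix $[G_{z}\xi]_{y}=i\sigma_{1}\xi$ are exactly the details the paper leaves implicit (and uses later in Theorem \ref{K3}).
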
 
\begin{proof} The statements regarding the resolvents and the actions of the extensions follow from \cite[Theorem 2.1]{O&M} with $\Gamma_{\Pi,\Theta}(z)$ there defined either as $\Gamma_{\Pi,\Theta}(z):=\Theta-\Pi\tau G_{z}\Pi$ or as $\Gamma_{\Pi,\Theta}(z):=\widehat\Theta-\Pi\widehat\tau \widehat G_{z}\Pi$.\par
As regards the operators domains, we give the proof only for $\D_{\Pi,\Theta}$, since the one for $\H_{\Pi,\Theta}$ is of the same kind.
By the resolvent formula, one has 
$$\dom(\D_{\Pi,\Theta})=\{\Psi\in L^{2}(\RE;\CO^{2}):\Psi=\Psi_{z}+G_{z}\Pi(\Theta-\Pi\tau G_{z}\Pi)^{-1}\Pi\tau\Psi_{z}\,,\ \Psi_{z}\in H^{1}(\RE;\CO^{2})\}\,.
$$
Let us define $\xi_{z}:=(\Theta-\Pi\,\tau G_{z}\Pi)^{-1}\Pi\tau\Psi_{z}\in\ran(\Pi)$;  it is not difficult to check that 
$\xi_{z}$ does not depend on $z$ and so $\Psi=\Psi_{z}+G_{z}\xi$. Then 
$$\Pi\tau\Psi-\Theta\xi=\Pi\tau\Psi_{z}+\Pi\tau G_{z}\xi-\Theta\xi=\Pi\tau\Psi_{z}-(\Theta-\Pi
\tau G_{z}\Pi)\xi=0\,.$$ \end{proof}
\begin{remark} Notice that the choice $\Pi=\zero$ gives the self-adjoint extensions $\D$ and $\H$. Therefore, in the following we always suppose $\Pi\not=\zero$\par 
\end{remark}

Since we want to extend the relation \eqref{Rel} to the case with point interactions, we also need to consider the self-adjoint extensions of $\widehat S^{\circ}\oneb$. There are no essential changes with respect to the case of $\CO$-valued functions, the only relevant one being that the defect indices increase to $(4n,4n)$. The result is of the same kind as in Theorem \ref{K1}. 
 \begin{theorem}\label{K2}
The set of the self-adjoint extensions of $\widehat S\oneb$ is parametrized by couples $(\widehat\Pi,\widehat\Theta)$, where  $\widehat\Pi:\CO^{4n}\to\CO^{4n}$ is an orthogonal projector and $\widehat\Theta:\ran(\widehat\Pi)\to\ran(\widehat\Pi)$ is symmetric. The extension $\widehat\H_{\Pi,\Theta}$ has resolvent 
\begin{align*}
(-\widehat\H_{\widehat\Pi,\widehat\Theta}+z)^{-1}
=(-\H+z)^{-1}\oneb+(\widehat G_{z}\oneb)\widehat\Pi(\widehat\Theta-\widehat\Pi(\widehat\tau \widehat G_{z}\oneb)\widehat\Pi)^{-1}\widehat\Pi (\widehat G_{\bar z}^{*}\oneb),\quad z\in\varrho(\widehat\H_{\widehat\Pi,\widehat\Theta})\cap\varrho(\H).
\end{align*}
Moreover,
\begin{align*}
\dom(\widehat \H_{\widehat \Pi,\widehat \Theta})
=
\{\Psi\in L^{2}(\RE ;\CO^{2}):\Psi=\Psi_{z}+(\widehat G_{z}\oneb)\widehat\xi,\ \Psi_{z}\in H^{2}(\RE ;\CO^{2}),\ \widehat\xi\in\ran(\widehat \Pi),\ \widehat \Pi(\widehat\tau\oneb)\Psi=\widehat \Theta\widehat\xi\,\},
\end{align*}
$$
(-\widehat\H_{\widehat\Pi,\widehat\Theta}+z)\Psi=(-\H+z)\oneb\Psi_{z}\,;
$$
such representation is $z$-independent and the decomposition of $\Psi$ in $\dom(\widehat\H_{\widehat\Pi,\widehat\Theta})$ is unique.
\end{theorem}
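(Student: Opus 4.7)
The plan is to reproduce verbatim the strategy used in the proof of Theorem \ref{K1}, applied now to the closed symmetric operator $\widehat S\oneb=\widehat S\oplus\widehat S$ in $L^{2}(\RE;\CO^{2})$ together with the trace map $\widehat\tau\oneb:H^{2}(\RE\setminus Y;\CO^{2})\to\CO^{4n}$ obtained by applying $\widehat\tau$ componentwise to each of the two copies of $L^{2}(\RE)$. Since $\widehat S$ has defect indices $(2n,2n)$, the direct sum $\widehat S\oneb$ has defect indices $(4n,4n)$, which explains the doubling of the parametrization space relative to Theorem \ref{K1}.

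First I would identify the associated Green-type operator $\widehat G_{z}\oneb:\CO^{4n}\to L^{2}(\RE;\CO^{2})$, defined as the adjoint of $(\widehat\tau\oneb)(-\H\oneb+\bar z)^{-1}$. By the direct-sum structure this operator splits as $\widehat G_{z}\oplus\widehat G_{z}$, so that the block matrix $(\widehat\tau\oneb)(\widehat G_{z}\oneb):\CO^{4n}\to\CO^{4n}$ is the block-diagonal sum of two copies of the $2n\times 2n$ matrix $\widehat\tau\widehat G_{z}$ whose entries are given by \eqref{Hjk}. With these ingredients, \cite[Theorem 2.1]{O&M} applied with $\Gamma_{\widehat\Pi,\widehat\Theta}(z):=\widehat\Theta-\widehat\Pi(\widehat\tau\widehat G_{z}\oneb)\widehat\Pi$ directly produces the family $\widehat\H_{\widehat\Pi,\widehat\Theta}$ indexed by orthogonal projectors $\widehat\Pi:\CO^{4n}\to\CO^{4n}$ and symmetric operators $\widehat\Theta$ on $\ran(\widehat\Pi)$, together with the resolvent formula in the statement.

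For the characterization of the operator domain I would proceed exactly as in the proof of Theorem \ref{K1}. Starting from the Krein-type description
$$
\dom(\widehat\H_{\widehat\Pi,\widehat\Theta})=\{\Psi=\Psi_{z}+(\widehat G_{z}\oneb)\widehat\Pi(\widehat\Theta-\widehat\Pi(\widehat\tau\widehat G_{z}\oneb)\widehat\Pi)^{-1}\widehat\Pi(\widehat\tau\oneb)\Psi_{z}:\Psi_{z}\in H^{2}(\RE;\CO^{2})\},
$$
I would set $\widehat\xi:=(\widehat\Theta-\widehat\Pi(\widehat\tau\widehat G_{z}\oneb)\widehat\Pi)^{-1}\widehat\Pi(\widehat\tau\oneb)\Psi_{z}\in\ran(\widehat\Pi)$, check its $z$-independence via the first resolvent identity, and then recover the boundary condition through the direct computation
$$
\widehat\Pi(\widehat\tau\oneb)\Psi-\widehat\Theta\widehat\xi=\widehat\Pi(\widehat\tau\oneb)\Psi_{z}+\widehat\Pi(\widehat\tau\widehat G_{z}\oneb)\widehat\xi-\widehat\Theta\widehat\xi=\widehat\Pi(\widehat\tau\oneb)\Psi_{z}-(\widehat\Theta-\widehat\Pi(\widehat\tau\widehat G_{z}\oneb)\widehat\Pi)\widehat\xi=0.
$$
Uniqueness of the decomposition $\Psi=\Psi_{z}+(\widehat G_{z}\oneb)\widehat\xi$ follows by observing that any non-zero $\widehat\eta\in\ran(\widehat\Pi)$ produces $(\widehat G_{z}\oneb)\widehat\eta\notin H^{2}(\RE;\CO^{2})$, since the Schr\"odinger Green kernel $\widehat g_{z}$ is continuous with jump in the first derivative at $y$ while $\widehat g_{z}'$ is itself discontinuous at $y$.

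Since the argument parallels Theorem \ref{K1} step by step and merely invokes the abstract scheme of \cite{O&M}, I expect no substantial obstacle. The only bookkeeping point is to track the doubled boundary data: at each $y_{k}$ one prescribes both $\langle\Psi\rangle_{y_{k}}\in\CO^{2}$ and $\langle\Psi'\rangle_{y_{k}}\in\CO^{2}$, giving $4n$ boundary values in total, encoded by $\widehat\Pi\in\CO^{4n\times 4n}$.
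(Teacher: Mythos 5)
Your proposal is correct and follows essentially the same route as the paper, which proves Theorem \ref{K2} simply by noting that the argument for Theorem \ref{K1} carries over verbatim to the direct sum $\widehat S\oneb$ with defect indices $(4n,4n)$: invoke \cite[Theorem 2.1]{O&M} with $\Gamma_{\widehat\Pi,\widehat\Theta}(z)=\widehat\Theta-\widehat\Pi(\widehat\tau\widehat G_{z}\oneb)\widehat\Pi$ for the resolvent, then extract the domain description by setting $\widehat\xi$ as you do and checking the boundary condition. Your added remark on uniqueness of the decomposition (a nonzero $\widehat\eta$ forces a jump of $\widehat G_{z}\oneb\,\widehat\eta$ or of its derivative at some $y_{k}$, hence non-membership in $H^{2}(\RE;\CO^{2})$) is a correct filling-in of a detail the paper leaves implicit.
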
 
\begin{remark}\label{block} By Theorems \ref{K1} and \ref{K2}, if both $\widehat\Pi$ and $\widehat\Theta$ are block diagonal, i.e.,  $\widehat\Pi=\Pi_{1}\oplus\Pi_{2}$ and $\widehat\Theta=\Theta_{1}\oplus\Theta_{1}$, then 
$$
(-\widehat \H_{\Pi_{1}\oplus\Pi_{2},\Theta_{1}\oplus\Pi_{2}}+z)^{-1}=(-\H_{\Pi_{1},\Theta_{1}}+z)^{-1}\oplus(-\H_{\Pi_{2},\Theta_{2}}+z)^{-1},
$$
equivalently, 
$$
\widehat \H_{\Pi_{1}\oplus\Pi_{2},\Theta_{1}\oplus\Pi_{2}}=
\H_{\Pi_{1},\Theta_{1}}\oplus\H_{\Pi_{2},\Theta_{2}}.
$$
In particular, 
 $$
 \widehat \H_{\Pi\uno,\Theta\uno}=\H_{\Pi,\Theta}\oneb\,.
 $$
 \end{remark}
\begin{remark}\label{rem-delta} Since $g_{z}$ is the fundamental solution of $-\D+z$, one has
$$
(-\D_{\Pi,\Theta}+z)\Psi=(-\D+z)(\Psi-G_{z}\xi)=(-\D+z)\Psi-\sum_{k=1}^{n}\xi_{k}\delta_{y_{k}}\,,
$$
i.e.,
$$
\D_{\Pi,\Theta}\Psi=\D\Psi+\sum_{k=1}^{n}\xi_{k}\delta_{y_{k}}\,,\quad\xi\equiv(\xi_{1},\dots,\xi_{n})\,,
$$
where the action of $\D$ on $\Psi\in L^{2}(\RE ;\CO^{2})$ is to be understood in distributional sense.
Analogously,
$$
\H_{\Pi,\Theta}\psi=\H\psi+\sum_{k=1}^{n}(\xi_{k,1}\delta_{y_{k}}+\xi_{k,2}\delta'_{y_{k}}),
\quad\xi\equiv((\xi_{1,1},\xi_{1,2}),\dots,(\xi_{n,1},\xi_{n,2}))\,,
$$
$$
\widehat \H_{\widehat \Pi,\widehat \Theta}\Psi=\H\oneb\Psi+\sum_{k=1}^{n}(\widehat\xi_{k,1}\delta_{y_{k}}+\widehat\xi_{k,2}\delta'_{y_{k}}),
\quad\widehat\xi\equiv((\widehat\xi_{1,1},\widehat\xi_{1,2}),\dots,(\widehat\xi_{n,1},\widehat\xi_{n,2}))\,.
$$
\end{remark} 
In the following, we use the abbreviated notations $\D_{\Theta}\equiv\D_{\uno,\Theta}$, $\H_{\Theta}\equiv\H_{\uno,\Theta}$, $\widehat\H_{\widehat\Theta}\equiv\widehat\H_{\uno,\widehat\Theta}\,.$  
\section{$\D^{2}=\H+\frac14$ with point interactions}\label{S3}
We begin this section by providing an equivalent representation of the domains and actions of the self-adjoint operators we built in Section \ref{S2}. In the next theorem and in the following, 
$$\D_{\RE\backslash Y}:{\mathscr D}'(\RE\backslash Y;\CO^{2})\to {\mathscr D}'(\RE\backslash Y;\CO^{2})\,,\qquad \H_{\RE\backslash Y}:{\mathscr D}'(\RE\backslash Y)\to {\mathscr D}'(\RE\backslash Y)
$$ denote the free Dirac and Schr\"odinger operators in the space of distributions on $\RE\backslash Y$; their restrictions to $H^{1}(\RE\backslash Y;\CO^{2})$ and $H^{2}(\RE\backslash Y)$ are $L^{2}(\RE;\CO^{2})$ and $L^{2}(\RE)$-valued respectively.
\begin{theorem}\label{K3} Let $\D_{\Pi,\Theta}$, $\H_{\Pi,\Theta}$ and $\widehat \H_{\widehat \Pi,\widehat \Theta}$ as in Section \ref{S2}. Then
$$
\dom(\D_{\Pi,\Theta})=\{\Psi\in H^{1}(\RE\backslash Y;\CO^{2}):\rho\Psi\in\ran(\Pi),\ \Pi\tau\Psi=\Theta\rho\Psi\},\quad \D_{\Pi,\Theta}\Psi=\D_{\RE\backslash Y}\Psi\,,
$$
$$
\dom(\H_{\Pi,\Theta})=\{\psi\in H^{2}(\RE\backslash Y):\widehat\rho\psi\in\ran(\Pi),\ \Pi\widehat\tau\psi=\Theta\widehat\rho\psi\},\quad \H_{\Pi,\Theta}\psi=\H_{\RE\backslash Y}\psi\,,
$$
$$
\dom(\widehat\H_{\widehat\Pi,\widehat\Theta})=\{\Psi\in H^{2}(\RE\backslash Y;\CO^{2}):(\widehat\rho\uno)\Psi\in\ran(\widehat\Pi),\ \widehat\Pi(\widehat\tau\uno)\Psi=\widehat\Theta(\widehat\rho\uno)\Psi\},\quad \widehat\H_{\widehat\Pi,\widehat\Theta}\Psi=(\H_{\RE\backslash Y}\uno)\Psi\,,
$$
where 
\begin{align*}
			\rho:H^1(\RE\backslash Y;\CO^{2})\to\CO^{2n}\,,\quad \rho\Psi:=\big(
			\rho_{y_{1}}\Psi\,,\dots,\rho_{y_{n}}\Psi\big)\,,\qquad\rho_{y}\Psi:=i\sigma_{1}[\Psi]_{y}\,,
			\label{rho}
		\end{align*}
	$$
		\widehat\rho:H^2(\RE\backslash Y)\to\CO^{2n}\,,\qquad \widehat\rho \psi:=
		\big(\widehat\rho_{y_{1}}\psi\,,\dots,\widehat\rho_{y_{n}}\psi\big)\,,\qquad
		\widehat\rho_{y}\psi:=
		[\psi'\,]_{y}\oplus[-\psi]_{y}\,,
	$$
		
	$$	[f]_{y}:=f(y^{+})-f(y^{-})\,.
	$$

\end{theorem}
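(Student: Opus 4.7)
The plan is to pass from the resolvent-parametrized description of the three domains obtained in Theorems \ref{K1}--\ref{K2} to the boundary-condition form stated here. The whole argument reduces to two ``jump identities''
$$
\rho(G_{z}\xi)=\xi\qquad\text{and}\qquad \widehat\rho(\widehat G_{z}\xi)=\xi
$$
for every $z$ in the relevant resolvent set and every admissible $\xi$; once these are in hand, all three equalities of domains drop out by rewriting the decomposition of a domain element and checking the boundary conditions.

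I would prove the first identity as follows. From $G_{z}=(\tau(-\D+\bar z)^{-1})^{*}=(-\D+z)^{-1}\tau^{*}$ and the obvious $\tau^{*}\xi=\sum_{k}\xi_{k}\delta_{y_{k}}$ one gets
$$
(-\D+z)G_{z}\xi=\sum_{k}\xi_{k}\,\delta_{y_{k}}
$$
in the distributional sense. On the other hand, since $G_{z}\xi\in H^{1}(\RE\backslash Y;\CO^{2})$ and classically solves $(-\D_{\RE\backslash Y}+z)G_{z}\xi=0$, the jump formula $\partial_{x}(G_{z}\xi)=\{(G_{z}\xi)^{\prime}\}_{\RE\backslash Y}+\sum_{k}[G_{z}\xi]_{y_{k}}\delta_{y_{k}}$ gives the distributional equality $(-\D+z)G_{z}\xi=i\sigma_{1}\sum_{k}[G_{z}\xi]_{y_{k}}\delta_{y_{k}}$. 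Matching the $\delta_{y_{k}}$-coefficients yields $i\sigma_{1}[G_{z}\xi]_{y_{k}}=\xi_{k}$, i.e., $\rho_{y_{k}}(G_{z}\xi)=\xi_{k}$. For $\widehat G_{z}$ the scheme is the same, but now one uses the higher-order jump formula $\partial_{x}^{2}\psi=\{\psi^{\prime\prime}\}_{\RE\backslash Y}+\sum_{k}[\psi^{\prime}]_{y_{k}}\delta_{y_{k}}+\sum_{k}[\psi]_{y_{k}}\delta_{y_{k}}^{\prime}$ and matches separately the $\delta_{y_{k}}$- and $\delta_{y_{k}}^{\prime}$-coefficients in $(-\H+z)\widehat G_{z}\xi=\widehat\tau^{*}\xi=\sum_{k}(\xi_{k,1}\delta_{y_{k}}-\xi_{k,2}\delta_{y_{k}}^{\prime})$, obtaining $[(\widehat G_{z}\xi)^{\prime}]_{y_{k}}=\xi_{k,1}$ and $[\widehat G_{z}\xi]_{y_{k}}=-\xi_{k,2}$; together these are exactly $\widehat\rho(\widehat G_{z}\xi)=\xi$, the sign in $[-\psi]_{y}$ built into $\widehat\rho$ being designed precisely to absorb the minus in the second relation.

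Having the two jump identities, each equality of domains follows by double inclusion. Starting from $\Psi\in\dom(\D_{\Pi,\Theta})$ as in Theorem \ref{K1}, write $\Psi=\Psi_{z}+G_{z}\xi$ with $\Psi_{z}\in H^{1}(\RE;\CO^{2})$, $\xi\in\ran(\Pi)$ and $\Pi\tau\Psi=\Theta\xi$; then $\Psi\in H^{1}(\RE\backslash Y;\CO^{2})$, $\rho\Psi=\rho(G_{z}\xi)=\xi\in\ran(\Pi)$, and the condition becomes $\Pi\tau\Psi=\Theta\rho\Psi$. Conversely, given $\Psi\in H^{1}(\RE\backslash Y;\CO^{2})$ with $\rho\Psi\in\ran(\Pi)$ and $\Pi\tau\Psi=\Theta\rho\Psi$, set $\xi:=\rho\Psi\in\ran(\Pi)$ and $\Psi_{z}:=\Psi-G_{z}\xi$; by the jump identity $[\Psi_{z}]_{y_{k}}=[\Psi]_{y_{k}}-[G_{z}\xi]_{y_{k}}=0$, hence $\Psi_{z}\in H^{1}(\RE;\CO^{2})$, and the Theorem \ref{K1} conditions are verified. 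The identification of the action comes from $(-\D_{\Pi,\Theta}+z)\Psi=(-\D+z)\Psi_{z}$ together with $(-\D+z)G_{z}\xi=0$ on $\RE\backslash Y$: these give $(-\D_{\Pi,\Theta}+z)\Psi=(-\D_{\RE\backslash Y}+z)\Psi$ as elements of $L^{2}(\RE;\CO^{2})$, hence $\D_{\Pi,\Theta}\Psi=\D_{\RE\backslash Y}\Psi$.

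The proofs for $\H_{\Pi,\Theta}$ and $\widehat\H_{\widehat\Pi,\widehat\Theta}$ are structurally identical, with $H^{1}$, $\tau$, $\rho$, $G_{z}$ replaced by $H^{2}$, $\widehat\tau$, $\widehat\rho$, $\widehat G_{z}$ (and, in the last case, an extra tensorization by $\oneb$ which commutes trivially with every ingredient). The main technical point of care is the sign bookkeeping in the Schr\"odinger jump identity, where the two entries of $\xi_{k}\in\CO^{2}$ couple asymmetrically to $[\psi]_{y_{k}}$ and $[\psi^{\prime}]_{y_{k}}$ and where the formal adjoint $\widehat\tau^{*}$ itself carries a minus on its $\delta^{\prime}$-component; once these signs are correctly threaded through the computation, the rest of the proof is entirely routine.
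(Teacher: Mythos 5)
Your proof is correct, and it reaches the same pivot as the paper --- the jump identities $\rho\, G_{z}\xi=\xi$ and $\widehat\rho\,\widehat G_{z}\xi=\xi$ together with $\rho\Psi_{z}=0$ for regular $\Psi_{z}$ --- but it differs from the paper's argument in two places. First, you derive the jump identities by writing $(-\D+z)G_{z}\xi=\tau^{*}\xi$ and matching $\delta$- (and, for Schr\"odinger, $\delta'$-) coefficients against the distributional jump formula, whereas the paper simply reads the jump off the explicit kernel \eqref{g}; your route has the advantage of being self-verifying on signs (indeed, the paper's stated intermediate formula $[G_{z}\xi]_{y}=i\sigma_{1}\xi$ combined with $\rho_{y}\Psi=i\sigma_{1}[\Psi]_{y}$ would literally give $\rho G_{z}\xi=-\xi$, a slip traceable to the orientation of the argument in the representation $[G_{z}\xi](x)=\sum_{k}g_{z}(y_{k}-x)\xi_{k}$, while your distributional bookkeeping lands on the intended $\rho G_{z}\xi=\xi$). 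Second, and more substantively, you prove the reverse inclusion ${\mathcal D}\subseteq\dom(\D_{\Pi,\Theta})$ directly: given $\Psi\in{\mathcal D}$ you set $\xi:=\rho\Psi$, check via the jump identity and Remark \ref{H1H2} that $\Psi-G_{z}\xi\in H^{1}(\RE;\CO^{2})$, and verify the conditions of Theorem \ref{K1}. The paper instead proves only the inclusion $\dom(\D_{\Pi,\Theta})\subseteq{\mathcal D}$ and then closes the gap abstractly: $\D_{\RE\backslash Y}|{\mathcal D}$ is symmetric by integration by parts, and a symmetric extension of a self-adjoint operator must coincide with it. Your version is more elementary (no maximality argument, no integration by parts) at the cost of one extra explicit construction; the paper's is shorter but leans on the self-adjointness of $\D_{\Pi,\Theta}$ established in Theorem \ref{K1}. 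Both are valid, and your treatment of the action $\D_{\Pi,\Theta}\Psi=\D_{\RE\backslash Y}\Psi$ and of the $\H_{\Pi,\Theta}$ and $\widehat\H_{\widehat\Pi,\widehat\Theta}$ cases matches the paper's.
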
 
\begin{proof}  Let $\Psi=\Psi_{z}+G_{z}\xi\in \dom(\D_{\Pi,\Theta})$. One has 
$\Psi_{z}\in H^{1}(\RE;\CO^{2})\subset H^{1}(\RE\backslash Y;\CO^{2})$ and $G_{z}\xi\in H^{1}(\RE\backslash Y;\CO^{2})$; therefore, $\Psi\in H^{1}(\RE\backslash Y;\CO^{2})$. By $[G_{z}\xi]_{y}=i\sigma_{1}\xi$, one gets $\rho G_{z}\xi=\xi$; 
furthermore, by $H^{1}(\RE;\CO^{2})\subset C_{b}(\RE;\CO^{2})$, one gets $\rho\Psi_{z}=0$. Therefore, 
$$\dom(\D_{\Pi,\Theta})\subseteq {\mathcal D}:=\{\Psi\in H^{1}(\RE\backslash Y;\CO^{2}):\rho\Psi\in\ran(\Pi),\ \Pi\tau\Psi=\Theta\rho\Psi\}\,.$$
By Remark \ref{rem-delta}, $\D_{\Pi,\Theta}\Psi=\D_{\RE\backslash Y}\Psi$ for any $\Psi\in \dom(\D_{\Pi,\Theta})$, i.e., $\D_{\Pi,\Theta}\subset \D_{\RE\backslash Y}|{\mathcal D}$. Moreover, by integration by parts, $\D_{\RE\backslash Y}|{\mathcal D}$ is symmetric; hence, since $\D_{\Pi,\Theta}$ is self-adjoint, one gets $\D_{\Pi,\Theta}=\D_{\RE\backslash Y}|{\mathcal D}$.\par 
The proofs for $\H_{\Pi,\Theta}$ and 
$\widehat\H_{\widehat\Pi,\widehat\Theta}$ are of the same kind, using the relation $\widehat\rho \widehat G_{z}\xi=\xi$.
\end{proof}
\begin{remark}\label{H1H2} Notice that $\psi\in H^{1}(\RE\backslash Y)$ belongs to $H^{1}(\RE)$ if and only if 
$[\psi]_{y_{k}}=0$ for any $k$ and consequently $\psi\in H^{2}(\RE\backslash Y)$ belongs to $H^{2}(\RE)$ if and only if $[\psi]_{y_{k}}=[\psi']_{y_{k}}=0$ for any $k$.
\end{remark}
By Theorem \ref{K3} and by
$$
(\D_{\RE\backslash Y})^{2}=\left(\H_{\RE\backslash Y}+\frac14\right)\uno\,,
$$
given the couple $(\Pi,\Theta)$, one gets that  the couple $(\widehat\Pi,\widehat\Theta)$ is such that 
\be\label{D2=H}
(\D_{\Pi,\Theta})^{2}=\widehat\H_{\widehat\Pi,\widehat\Theta}+\frac{\uno}4\,,
\ee
if and only if  
\be\label{dom=dom}
\dom((\D_{\Pi,\Theta})^{2})=\dom(\widehat\H_{\widehat\Pi,\widehat\Theta})\,.
\ee
Therefore, exploiting the definitions of the operator domains in Theorem \ref{K3}, there exists a couple $(\widehat\Pi,\widehat\Theta)$ for which \eqref{D2=H} holds if and only if,  given $(\Pi,\Theta)$, there exists $(\widehat\Pi,\widehat\Theta)$, $\widehat\Pi$ an orthogonal projector in $\CO^{4n}$ and $\widehat\Theta$ symmetric in $\ran(\widehat\Pi)$, such that 
\be\label{iff}
\begin{cases}\rho\Psi\oplus\rho\D_{\RE\backslash Y}\Psi\in\ran(\Pi\oplus\Pi)\\
(\Pi\oplus\Pi)\tau\Psi\oplus\tau\D_{\RE\backslash Y}\Psi=(\Theta\oplus\Theta)\rho\Psi\oplus\rho\D_{\RE\backslash Y}\Psi
\end{cases}\quad\iff\qquad
\begin{cases}(\widehat\rho\uno)\Psi\in\ran(\widehat\Pi)\\
\widehat\Pi(\widehat\tau\uno)\Psi=\widehat\Theta(\widehat\rho\uno)\Psi\,.
\end{cases}
\ee
\subsection{Spectral correspondence} The relation \eqref{D2=H} entails $\pm z\in\varrho(\D_{\Pi,\Theta})$ if and only if $z^{2}-\frac14\in\varrho(\widehat\H_{\widehat\Pi,\widehat\Theta})$, equivalently,
$\pm \lambda\in\sigma(\D_{\Pi,\Theta})$ if and only if $\lambda^{2}-\frac14\in\sigma(\widehat\H_{\widehat\Pi,\widehat\Theta})$, and 
\be\label{res-ext}
(-\D_{\Pi,\Theta}+z)^{-1}=(\D_{\Pi,\Theta}+z)\left(-\widehat\H_{\widehat\Pi,\widehat\Theta}+\left(z^{2}-\frac14\right)\uno\right)^{\!\!-1}\,.
\ee
Furthermore, since, by the invariance of the essential spectrum by finite-rank perturbations, 
$$
\sigma_{ess}(\widehat\H_{\widehat\Pi,\widehat\Theta})=\sigma_{ess}(\H\uno)=[0,\infty)\,,\qquad\sigma_{ess}(\D_{\widehat\Pi,\widehat\Theta})=\sigma_{ess}(\D)=\left(-\infty,-\frac12\right]\cup \left[\frac12,+\infty\right)\,,
$$ 
one gets 
$$
\lambda\in\sigma_{disc}(\widehat\H_{\widehat\Pi,\widehat\Theta})\cap\left[-\frac14,0\right)
\quad\iff\quad\pm\left(\lambda+\frac14\right)^{\frac12}\in\sigma_{disc}(\D_{\Pi,\Theta})\,.
$$
By the resolvent formulae in Theorems \ref{K1} and \ref{K2}, 
\be\label{detD}
\lambda\in \sigma_{disc}(\D_{\Pi,\Theta})\quad\iff\quad\text{$\lambda\in(-1/2,1/2)\quad$ and $\quad\det(\Theta-\Pi\tau G_{\lambda}\Pi)=0$}\,,
\ee
\be\label{detH}
\lambda\in \sigma_{disc}(\widehat\H_{\widehat\Pi,\widehat\Theta})\quad\iff\quad\text{$\lambda\in(-\infty,0)\quad$ and $\quad\det(\widehat\Theta-\widehat\Pi(\widehat\tau \widehat G_{\lambda}\uno)\widehat\Pi)=0$}\,.
\ee
Now, we solve \eqref{iff} starting from the simplest case $n=1$, $\Pi=\uno$ and then proceeding step-by-step towards the most general case.
\subsection{The case $n=1$, $\Pi=\uno$.}\label{n=1-pi=1} By \eqref{iff}, given the $2\times 2$ Hermitian  matrix $\Theta$, we need to find the $4\times 4$ Hermitian  matrix $\widehat\Theta$ such that 
\begin{equation}\label{iff-n=1}
	\begin{bmatrix} \tau_{y} \Psi \\ \tau_{y} \D_{\RE\backslash \{y\}}\Psi	\end{bmatrix}=
	\begin{bmatrix} \Theta&{\mathbb 0} \\ {\mathbb 0}&\Theta\end{bmatrix}\begin{bmatrix} \rho_{y} \Psi \\ \rho_{y} \D_{\RE\backslash \{y\}}\Psi\end{bmatrix}
	\quad \iff \quad(\widehat \tau_{y} \uno)\Psi = \widehat\Theta(\widehat \rho_{y}\uno) \Psi\,.
\end{equation}
To solve \eqref{iff-n=1}, at first we look for the two invertible matrices $M_1$ and $M_2$ such that
\begin{align}
(\widehat \tau_{y} \uno)\Psi = M_1 \begin{bmatrix} \tau_{y} \Psi \\ \tau_{y} \D_{\RE\backslash \{y\}}\Psi	\end{bmatrix}, \qquad (\widehat \rho_{y}\uno) \Psi = M_2 \begin{bmatrix} \rho_{y} \Psi \\ \rho_{y} \D_{\RE\backslash \{y\}}\Psi	\end{bmatrix}.
\label{ref1}
\end{align}
By direct calculations, one gets 
\be\label{mat-1-2}
M_1 = \begin{bmatrix}	1 & 0 & 0 & 0 \\ 0 & \frac{i}{2} & 0 & i \\ 0 & 1 & 0 & 0 \\- \frac{i}{2}& 0 & i & 0\end{bmatrix}, \qquad 
M_2 = \begin{bmatrix} \frac{1}{2} & 0 & 1 & 0 \\ 0 & i & 0 & 0 \\ 0 & -\frac{1}{2} & 0 & 1 \\ i & 0 & 0 & 0\end{bmatrix}.
\ee
Therefore, \eqref{iff-n=1} rewrites as
$$		M_1^{-1}(\widehat \tau_{y}\uno) \Psi = (\Theta \oplus \Theta) M_2^{-1} (\widehat \rho_{y}\uno) \Psi
\quad \iff \quad (\widehat \tau_{y}\uno) \Psi = \widehat\Theta(\widehat \rho_{y}\uno) \Psi
$$
and so the relation between $\widehat\Theta $ and $\Theta$ is given by
\begin{align}\label{wT1}
	\widehat\Theta =& M_1 (\Theta\oplus \Theta) M_2^{-1}\,.
\end{align}
By
$$
\widehat\Theta=\widehat\Theta^{*}\quad\iff\quad M^{*}_1 M_2 (\Theta\oplus \Theta)=
(\Theta\oplus \Theta)M_{2}^{*}M_{1},
$$
$\widehat\Theta$ is symmetric by the relations 
\be\label{M1M2}
M_{1}^{*}M_{2}=\begin{bmatrix}\zero&\uno\\ \uno&\zero
\end{bmatrix}=M_{2}^{*}M_{1}\,.
\ee
More explicitly, if 
$$
\Theta=\begin{bmatrix} a& b\\ \bar b & d \end{bmatrix},\qquad a,d\in\RE,\, b\in\CO\,,
$$ 
then $\widehat\Theta$ is represented by the Hermitian  matrix
$$
	\widehat\Theta = \begin{bmatrix} 0 & -ib & 0 & -ia \\ i\bar b & d & id & 0 \\ 0 & -id & 0 & -i\bar b \\ia& 0 & ib & -a\end{bmatrix}\,.
$$
If $a=d=0$ and $b\in\RE$, i.e., if $\Theta=b\sigma_{1}$, then $\widehat\Theta = b(\sigma_{2}\oplus \sigma_{2})\equiv b\sigma_{2}\uno$, where $\sigma_{2}$ denotes the Pauli matrix
$$
\sigma_2=
		\begin{bmatrix}
			0 & -i\\
			i & 0\\
		\end{bmatrix},
$$
and, by Remark \ref{block}, the corresponding Schr\"odinger operator in $L^{2}(\RE;\CO^{2})$ is  block diagonal: 
\be\label{bd}
(\D_{b\sigma_{1}})^{2} = \left(\H_{b\sigma_{2}} + \frac{1}{4}\right)\!\oneb\,.
\ee
\subsection{The case $n=1$, $\Pi\not=\oneb$.} Here we take $\Pi : \C^2 \to \C^2$ a not trivial orthogonal projection, i.e., $\text{dim}(\ran(\Pi))=1$, and $\Theta:\ran(\Pi)\to\ran(\Pi)$ identifies with the multiplication by $\theta\in\RE$. By \eqref{ref1}, \eqref{iff} rewrites as 
\be\label{iff-pi}
\begin{cases}
(\widehat \rho_{y}\uno) \Psi \in \ran(M_2(\Pi \oplus \Pi))\\
M_2(\Pi \oplus \Pi)M_1^{-1}(\widehat \tau_{y}\uno)\Psi= \theta (\widehat \rho_{y}\uno)\Psi
\end{cases}
\quad\iff\qquad
\begin{cases}(\widehat\rho_{y}\uno)\Psi\in\ran(\widehat\Pi)\\
\widehat\Pi(\widehat\tau_{y}\uno)\Psi=\widehat\Theta(\widehat\rho_{y}\uno)\Psi\,.
\end{cases}
\ee
Therefore, $\widehat\Pi:\CO^{4}\to\CO^{4}$ is the orthogonal projection onto the $2$-dimensional subspace 
$$\ran(\widehat\Pi)=  \ran(M_2(\Pi \oplus \Pi))=\ran(M_2(\Pi \oplus \Pi)M_1^{-1})\,,
$$ 
i.e.,
\begin{align*}
\widehat\Pi=
&M_{2}(\Pi\oplus\Pi)((\Pi\oplus\Pi)M_{2}^{*}M_{2}(\Pi\oplus\Pi))^{-1}(\Pi\oplus\Pi)M_{2}^{*}\\=&M_{2}(\Pi\oplus\Pi)(M_{2}^{*}M_{2})^{-1}(\Pi\oplus\Pi)M_{2}^{*}\\=&
(M_{2}(\Pi\oplus\Pi)M_{2}^{-1})(M_{2}(\Pi\oplus\Pi)M_{2}^{-1})^{*}
\,.
\end{align*}
By \eqref{M1M2}, $
	M_2(\Pi \oplus \Pi)M_1^{-1}$ is symmetric. Hence, $\ran(\widehat\Pi)=\ker(M_2(\Pi \oplus \Pi)M_1^{-1})^{\perp}$ and the symmetric operator 
$$
M_2(\Pi \oplus \Pi)M_1^{-1}:\ran(\widehat\Pi)\to\ran(\widehat\Pi)
$$
is a bijection. Then, \eqref{iff-pi} gives
$$	\widehat \Theta: \ran(\widehat \Pi) \to \ran(\widehat \Pi), \qquad \widehat \Theta := \theta(M_2(\Pi \oplus \Pi)M_1^{-1})^{-1}.
	$$	
\subsection{The case $n>1$, $\Pi=\uno$.} In order to exploit the results from the $n=1$ case, 
we introduce the unitary operator 
\be\label{uni}
U:\CO^{4n}\to\CO^{4n}\,,\qquad U(\xi_{1},\xi_{2},\dots,\xi_{2n}):=(\xi_{1},\xi_{n+1},\xi_{2}, \xi_{n+2},\dots,\xi_{n},\xi_{2n})\,,\quad \xi_{k}\in\CO^{2}\,.
\ee
By such a definition, 
$$
U(\tau\Psi\oplus\tau\D_{\RE\backslash Y}\Psi)=\left(\,\begin{bmatrix} \tau_{y_{1}} \Psi \\ \tau_{y_{1}} \D_{\RE\backslash Y}\Psi	\end{bmatrix},\dots,\begin{bmatrix} \tau_{y_{n}} \Psi \\ \tau_{y_{n}} \D_{\RE\backslash Y}\Psi	\end{bmatrix}\,\right),
$$
$$
U(\rho\Psi\oplus\rho\D_{\RE\backslash Y}\Psi)=\left(\,\begin{bmatrix} \rho_{y_{1}} \Psi \\ \rho_{y_{1}} \D_{\RE\backslash Y}\Psi	\end{bmatrix},\dots,\begin{bmatrix} \rho_{y_{n}} \Psi \\ \rho_{y_{n}} \D_{\RE\backslash Y}\Psi	\end{bmatrix}\,\right)\,.
$$
Therefore, setting
$$
M_{1}^{\oplus}:\CO^{4n}\to\CO^{4n}\,,\qquad M_{1}^{\oplus}:=M_{1}\oplus\dots\oplus M_{1}\,,
$$
$$
M_{2}^{\oplus}:\CO^{4n}\to\CO^{4n}\,,\qquad M_{2}^{\oplus}:=M_{2}\oplus\dots\oplus M_{2}\,,
$$
by \eqref{ref1}, one gets
$$
M_{1}^{\oplus}U(\tau\Psi\oplus\tau\D_{\RE\backslash Y}\Psi)
=\big((\widehat\tau_{y_{1}}\uno)\Psi,\dots,(\widehat\tau_{y_{n}}\uno)\Psi\big)=U(\widehat\tau\uno)\Psi\,,
$$ 
$$
M_{2}^{\oplus}U(\rho\Psi\oplus\rho\D_{\RE\backslash Y}\Psi)
=\big((\widehat\rho_{y_{1}}\uno)\Psi,\dots,(\widehat\rho_{y_{n}}\uno)\Psi\big)=U
(\widehat\rho\uno)\Psi
$$ 
and so \eqref{iff} rewrites as
$$
U^{*}(M_{1}^{\oplus})^{-1}U(\widehat\tau\uno)\Psi=(\Theta\oplus\Theta)
U^{*}(M_{2}^{\oplus})^{-1}U(\widehat\rho\uno)\Psi\quad \iff \quad(\widehat \tau\uno)\Psi = \widehat\Theta(\widehat \rho\uno) \Psi\,.
$$
This gives
\be\label{wT}
\widehat\Theta=U^{*}M_{1}^{\oplus}U(\Theta\oplus\Theta)
U^{*}(M_{2}^{\oplus})^{-1}U\,.
\ee
Such a operator $\widehat\Theta$ is symmetric by 
\be\label{M1M2-n}
U^{*}(M_{1}^{\oplus})^{*}M_{2}^{\oplus}U=\begin{bmatrix}\zero&\uno\\ \uno&\zero
\end{bmatrix}=U^{*}(M_{2}^{\oplus})^{*}M^{\oplus}_{1}U\,.
\ee
The relations \eqref{M1M2-n} generalize \eqref{M1M2}, since $U=\uno$ whenever $n=1$, and are a consequence of $\eqref{M1M2}$ itself and the definition \eqref{uni}.
\subsection{The case $n>1$, $\Pi\not=\oneb$.} Finally, we consider the most general case. Using the unitary $U:\CO^{4n}\to\CO^{4n}$ as in the previous section, \eqref{iff} rewrites as
\be\label{iff-pi-n}
\begin{cases}
U^{*}(M^{\oplus}_{2})^{-1}U(\widehat\rho\uno)\Psi\in\ran(\Pi\oplus\Pi)\\
(\Pi\oplus\Pi)U^{*}(M_{1}^{\oplus})^{-1}U(\widehat\tau\uno)\Psi=(\Theta\oplus\Theta)
U^{*}(M_{2}^{\oplus})^{-1}U(\widehat\rho\uno)\Psi\end{cases}
\iff \quad
\begin{cases}(\widehat\rho\uno)\Psi\in\ran(\widehat\Pi)\\
\widehat\Pi(\widehat\tau\uno)\Psi=\widehat\Theta(\widehat\rho\uno)\Psi\,.
\end{cases}
\ee
This gives the orthogonal projector $\widehat\Pi:\CO^{4n}\to\CO^{4n}$, with  $\text{dim}(\ran(\widehat\Pi))=2\,\text{dim}(\ran(\Pi))$, such that
\be\label{wPp}
\ran(\widehat\Pi)=\ran(U^{*}M_{2}^{\oplus}U(\Pi\oplus\Pi))=\ran(U^{*}M_{2}^{\oplus}U(\Pi\oplus\Pi)U^{*}(M_{1}^{\oplus})^{-1}U)\,,
\ee 
i.e.,
\begin{align*}
\widehat\Pi=&(U^{*}M^{\oplus}_{2}U(\Pi\oplus\Pi))
\big((U^{*}M_{2}^{\oplus}U(\Pi\oplus\Pi))^{*}(U^{*}M_{2}^{\oplus}U(\Pi\oplus\Pi))\big)^{-1}
(U^{*}M^{\oplus}_{2}U(\Pi\oplus\Pi))^{*}\nonumber\\
=&U^{*}M^{\oplus}_{2}U(\Pi\oplus\Pi)((U^{*}M^{\oplus}_{2}U)^{*}U^{*}M^{\oplus}_{2}U)^{-1}(\Pi\oplus\Pi)U^{*}(M^{\oplus}_{2})^{*}U\label{wPpp}\\
=&\big(U^{*}M^{\oplus}_{2}U(\Pi\oplus\Pi)U^{*}(M^{\oplus}_{2})^{-1}\big)\big(U^{*}M^{\oplus}_{2}U(\Pi\oplus\Pi)U^{*}(M^{\oplus}_{2})^{-1}\big)^{*}\nonumber
\,,
\end{align*}
and $\widehat\Theta:\ran(\widehat\Pi)\to\ran(\widehat\Pi)$,
\be\label{wTp}
\widehat\Theta:=\big(U^{*}M_{2}^{\oplus}U
(\Pi\oplus\Pi)U^{*}(M_{1}^{\oplus})^{-1}U\big)^{-1}
U^{*}M_{2}^{\oplus}U(\Theta\oplus\Theta)U^{*}(M_{2}^{\oplus})^{-1}U.
\ee
By \eqref{M1M2-n},  $U^{*}M^{\oplus}_2U(\Pi \oplus \Pi)U^{*}(M_1^{\oplus})^{-1}U$
is symmetric. Therefore, by $$\ran(\widehat\Pi)=\ran(U^{*}M^{\oplus}_2U(\Pi \oplus \Pi)U^{*}(M_1^{\oplus})^{-1}U)=\ker(U^{*}M^{\oplus}_2U(\Pi \oplus \Pi)U^{*}(M_1^{\oplus})^{-1}U)^{\perp}\,,$$ the operator 
$$
U^{*}M^{\oplus}_2U(\Pi \oplus \Pi)U^{*}(M_1^{\oplus})^{-1}U:\ran(\widehat\Pi)\to\ran(\widehat\Pi)
$$
is a bijection and $\widehat\Theta$ is well defined. To conclude, we have to show that $\widehat\Theta $ is symmetric. By \eqref{wTp} and by $U^{*}U=\uno$,
$$
\text{$\widehat\Theta$ is symmetric} \iff\text{$M_{2}^{\oplus}U(\Theta\oplus\Theta)
(\Pi\oplus\Pi)U^{*}(M_{1}^{\oplus})^{-1}$ is symmetric}
$$
and so $\widehat\Theta$ is symmetric by \eqref{M1M2-n}.
\begin{remark}\label{equi} Let us point out that it is not necessary to determine  $\widehat \Pi$ and $\widehat \Theta$ explicitly in order to write down the domain of $\widehat\H_{\widehat\Pi,\widehat\Theta}$. Indeed, by \eqref{iff},  
\begin{align*}
\dom(\widehat\H_{\widehat\Pi,\widehat\Theta})=\big\{\Psi\in H^{2}(\RE\backslash Y;\CO^{2}):
\rho\Psi\oplus\rho\D_{\RE\backslash Y}\Psi\in\ran(\Pi\oplus\Pi)&\\
(\Pi\tau\Psi)\oplus(\Pi\tau\D_{\RE\backslash Y}\Psi)=(\Theta\rho\Psi)\oplus(\Theta\rho\D_{\RE\backslash Y}\Psi)&\big\}\,.
\end{align*}
However, one needs to know $\widehat \Pi$ and $\widehat \Theta$ in order to write down the resolvent of $\widehat\H_{\widehat\Pi,\widehat\Theta}$, according to Theorem \ref{K2}.\par
The above representation of $\dom(\widehat\H_{\widehat\Pi,\widehat\Theta})$ suggests an alternative way to build the self-adjoint extensions of $\widehat S^{\circ}\uno=\H\uno|C^{\infty}_{comp}(\RE;\CO^{2})$: one can apply the results in \cite{JFA} and \cite{O&M} to $\H\uno|\ker(\widetilde\tau)$, where 
$$
\widetilde\tau:H^{2}(\RE;\CO^{2})\to\CO^{2n}\oplus\CO^{2n}\,,\qquad \widetilde\tau\Psi:=\tau\Psi\oplus\tau\D\Psi\,.$$
In that case, the family of self-adjoint extensions of $\widehat S^{\circ}\uno$ is represented by  operators of the kind $\widetilde H_{\widetilde \Pi,\widetilde\Theta}$, where $\widetilde\Pi$ is an orthogonal projector in $\CO^{2n}\oplus\CO^{2n}$ and $\widetilde\Theta$ is a symmetric operator in $\ran(\widetilde\Theta)$. With respect to this parametrization, one has that $\D_{\Pi,\Theta}^{2}=
\widetilde \H_{\widetilde \Pi,\widetilde\Theta}+\frac{\uno}4$ if and only if 
$\widetilde\Pi=\Pi\oplus\Pi$ and $\widetilde\Theta=\Theta\oplus\Theta$. Even if such a  correspondence is more explicit than the one which uses the couple $(\widehat\Pi,\widehat\Theta)$, it has the drawback that it works with a representation of the family of self-adjoint extensions of $\widehat S^{\circ}\uno$ which is different from the usual one and which lacks of  the analogous of the property $\widehat \H_{\Pi\uno,\Theta\uno}=\H_{\Pi,\Theta}\uno$. Therefore, in this paper we prefer to work with the family $\widehat \H_{\widehat \Pi,\widehat \Theta}$.
\end{remark}
\begin{remark}\label{equi2} Suppose that for any $\Psi\equiv(\psi_{1},\psi_{2})\in\dom(\D_{\Pi,\Theta})$ one has  
$$
\begin{cases}
\rho\Psi\in\ran(\Pi)\\
\Pi\tau\Psi=\Theta\rho\Psi
\end{cases}\iff\quad
\begin{cases}
B_{1}(\psi_{1})=0\\
B_{2}(\psi_{2})=0\,,
\end{cases}
$$
with some linear operators $B_{1}:H^{1}(\RE\backslash Y)\to\CO^{d_{1}}$ and $B_{2}:H^{1}(\RE\backslash Y)\to\CO^{d_{2}}$. Then, by the representation of $\dom(\widehat\H_{\widehat\Pi,\widehat\Theta})$ in Remark \ref{equi}, there follows that the boundary conditions for $\widehat\H_{\widehat\Pi,\widehat\Theta}$ rewrites as
$$
\begin{cases}
B_{1}(\psi_{1})=0\\
B_{1}\big(-i\psi'_{2}+\frac12\,\psi_{1}\big)=0\\
B_{2}(\psi_{2})=0\\
B_{2}\big(-i\psi'_{1}-\frac12\,\psi_{2}\big)=0
\end{cases}
\equiv\quad
\begin{cases}
B_{1}(\psi_{1})=0\\
B_{2}(\psi_{1}')=0\\
B_{2}(\psi_{2})=0\\
B_{1}(\psi'_{2})=0\,.
\end{cases}
$$
This entails 
$$
\dom(\widehat\H_{\widehat\Pi,\widehat\Theta})=\dom(\H_{1,2})\oplus\dom(\H_{2,1})\,,\qquad 
(\D_{\Pi,\Theta})^{2}=\left(\H_{1,2}+\frac14\right)\oplus\left(
\H_{2,1}+\frac14\right)\,,
$$ 
where the self-adjoint operators $\H_{j,k}:\dom(\H_{j,k})\subseteq L^{2}(\RE)\to L^{2}(\RE)$ are defined by 
$$
\dom(\H_{1,2}):=\{\psi\in H^{2}(\RE\backslash Y): B_{1}(\psi)=0,\ B_{2}(\psi')=0\}\,,\quad \H_{1,2}\psi:=\H_{\RE\backslash Y}\psi\,.
$$
$$
\dom(\H_{2,1}):=\{\psi\in H^{2}(\RE\backslash Y): B_{2}(\psi)=0,\ B_{1}(\psi')=0\}\,,\quad \H_{2,1}\psi:=\H_{\RE\backslash Y}\psi\,.
$$
\end{remark}
\section{Applications}\label{S4}
\subsection{Local boundary conditions.}\label{LBC} Here we consider the case corresponding to local boundary conditions for the Dirac operator, i.e., boundary conditions which do not couple the values of $\Psi$ at different point. That means 
$$
\Pi=\Pi_{1}\oplus\dots\oplus\Pi_{n}\,,  
\qquad\Pi_{k}:\CO^{2}\to\CO^{2}\,, \qquad 1\le k\le n\,,
$$
$$\Theta=\Theta_{1}\oplus\dots\oplus\Theta_{n}\,,  
\qquad\Theta_{k}:\ran(\Pi_{k})\to\ran(\Pi_{k})\,, \qquad 1\le k\le n\,.
$$ In this case, by
$$
U((\Pi_{1}\oplus\dots\oplus\Pi_{n})\oplus (\Pi_{1}\oplus\dots\oplus\Pi_{n}))U^{*}
=(\Pi_{1}\oplus\Pi_{1})\oplus\dots\oplus(\Pi_{n}\oplus \Pi_{n})\,,
$$ 
one gets, by \eqref{wPp},
\begin{align*}
\ran(\widehat\Pi)=&\ran(U^{*}(M_{2}(\Pi_{1}\oplus\Pi_{1})M_{1}^{-1})\oplus\dots\oplus
(M_{2}(\Pi_{n}\oplus\Pi_{n})M_{1}^{-1})U)\\
=&\ran(U^{*}(\widehat\Pi_{1}\oplus\dots\oplus\widehat\Pi_{n})U)\,,
\end{align*}
where, in the case  $\Pi_{k}\not=\uno$, $\widehat\Pi_{k}:\CO^{4}\to\CO^{4}$ 
is the orthogonal projector onto the $2$-dimensional subspace 
\be\label{lbc1}
\ran(\widehat\Pi_{k})=\ran(M_{2}(\Pi_{k}\oplus\Pi_{k})M_{1}^{-1})\,,
\ee
otherwise $\widehat\Pi_{k}=\uno$. 
Then, by \eqref{wTp} and by
\begin{align*}
&U((\Pi_{1}\Theta_{1}\Pi_{1}\oplus\dots\oplus\Pi_{n}\Theta_{n}\Pi_{n})\oplus (\Pi_{1}\Theta_{1}\Pi_{1}\oplus\dots\oplus\Pi_{n}\Theta_{n}\Pi_{n}))U^{*}\\
=&(\Pi_{1}\Theta_{1}\Pi_{1}\oplus\Pi_{1}\Theta_{1}\Pi_{1})\oplus\dots\oplus
(\Pi_{n}\Theta_{n}\Pi_{n}\oplus\Pi_{n}\Theta_{n}\Pi_{n})\,,
\end{align*}
one obtains
\begin{align*}
\widehat\Theta=
U^{*}(\widehat\Theta_{1}\oplus\dots\oplus\widehat\Theta_{n})U\,,
\end{align*}
where, in the case  $\Pi_{k}\not=\uno$, $\Theta_{k}\in\RE$,
$$
\widehat\Theta_{k}:\ran(\widehat\Pi_{k})\to\ran(\widehat\Pi_{k})\,,\qquad \widehat\Theta_{k}=
\Theta_{k}M_{1}(\Pi_{k}\oplus\Pi_{k})M_{2}^{-1}\,,
$$
otherwise,
$$
\widehat\Theta_{k}:\CO^{4}\to\CO^{4}\,,\qquad \widehat\Theta_{k}=
M_{1}(\Theta_{k}\oplus\Theta_{k})M_{2}^{-1}\,.
$$
Therefore, the corresponding boundary conditions for $\widehat\H_{\widehat\Pi,\widehat\Theta}$ are 
$$
\widehat\rho_{y_{k}}\Psi\in\ran(\widehat\Pi_{k})\,,\qquad \widehat\Pi_{k}\widehat\tau_{y_{k}}\Psi=\widehat\Theta_{k}\widehat\rho_{y_{k}}\Psi\,,\qquad 1\le k\le n\,,
$$
and so they are local as well.\par
\subsection{Gesztesy-\v{S}eba realizations}\label{GS} These two families of self-adjoint realizations of the Dirac operator with local point interactions correspond, in the non relativistic limit, to Schr\"odinger operators with local point interactions either of $\delta$-type or of $\delta'$-type (see \cite{GS}, \cite[Appendix J]{AGHKH}, \cite{CMP}). The operators in the $\alpha$-family have self-adjointness domains 
\be\label{alpha}
\dom(\D_{\alpha})=\{\Psi\equiv(\psi_{1},\psi_{2})\in H^{1}(\RE)\oplus 
H^{1}(\RE\backslash Y):[\psi_{2}]_{y_{k}}=-i\alpha_{k}\psi_{1}(y_{k}),\ 1\le k\le n\},\quad \alpha_{k}\in\RE\,,
\ee
and the ones in the $\beta$-family have self-adjointness domains
\be\label{beta}
\dom(\D_{\beta})=\{\Psi\equiv(\psi_{1},\psi_{2})\in H^{1}(\RE\backslash Y)\oplus 
H^{1}(\RE):[\psi_{1}]_{y_{k}}=-i\beta_{k}\psi_{2}(y_{k}),\ 1\le k\le n\},\quad \beta_{k}\in\RE\,.
\ee
Since the cases where all the $\alpha_{k}$'s or all the $\beta_{k}$'s are equal to zero correspond to $\D$, and the cases 
where there are $0<m<n$ $\alpha_{k}$'s or $\beta_{k}$'s which are zero reduce to the cases  with $(n-m)$ point interactions, without loss of generality we can suppose that all the $\alpha_{k}$'s and $\beta_{k}$'s
are different from zero. By Theorem \ref{K3} and Remark \ref{H1H2}, one has
$$
\D_{\alpha}=\D_{\Pi^{(\alpha)}, \Theta^{(\alpha)}}\,,\quad \Pi^{(\alpha)}=\Pi^{(\alpha)}_{1}\oplus\dots\oplus\Pi^{(\alpha)}_{n}\,,\quad
\Theta^{(\alpha)}=\Theta^{(\alpha)}_{1}\oplus\dots\oplus\Theta^{(\alpha)}_{n}\,,
$$
where
$$
\Pi^{(\alpha)}_{k}(\xi_{1},\xi_{2})=(\xi_{1},0)\,,\qquad \Theta_{k}^{(\alpha)}:\CO\to\CO\,,\quad \Theta_{k}^{(\alpha)}=\alpha_{k}^{-1}
$$
and
$$
\D_{\beta}=\D_{\Pi^{(\beta)}, \Theta^{(\beta)}}\,,\quad \Pi^{(\beta)}=\Pi^{(\beta)}_{1}\oplus\dots\oplus\Pi^{(\beta)}_{n}\,,\quad
\Theta^{(\beta)}=\Theta^{(\beta)}_{1}\oplus\dots\oplus\Theta^{(\beta)}_{n}\,,
$$
where
$$
\Pi^{(\beta)}_{k}(\xi_{1},\xi_{2})=(0,\xi_{2})\,,\qquad \Theta_{k}^{(\beta)}:\CO\to\CO\,,\quad \Theta_{k}^{(\beta)}=\beta_{k}^{-1}\,.
$$
Therefore,
$$
(\D_{\alpha})^{2}=\widehat\H_{\alpha}+\frac{\uno}4\,,\qquad
$$
where
$$
\widehat \H_{\alpha}=\widehat\H _{\widehat \Pi^{(\alpha)}, \widehat \Theta^{(\alpha)}}\,,
$$
$$\ran\big(\widehat \Pi^{(\alpha)}\big)=\ran\big(U^{*}\big(\widehat \Pi^{(\alpha)}_{1}\oplus\dots\oplus\widehat \Pi^{(\alpha)}_{n}\big)U\big)\,,\quad
\widehat \Theta^{(\alpha)}=U^{*}\big(\widehat \Theta^{(\alpha)}_{1}\oplus\dots\oplus\widehat \Theta^{(\alpha)}_{n}\big)U\,,
$$
$$
\ran(\widehat\Pi^{(\alpha)}_{k})=\ran(M_{2}(\Pi^{(\alpha)}_{k}\oplus\Pi^{(\alpha)}_{k}))=\CO\oplus\{0\}\oplus\{0\}\oplus\CO\equiv\CO^{2}\,,
$$
$$
\widehat\Theta_{k}^{(\alpha)}=M_{1}(\Theta^{(\alpha)}_{k}\oplus\Theta^{(\alpha)}_{k})M_{2}^{-1}:\CO^{2}\to\CO^{2}\,,\quad \widehat\Theta_{k}^{(\alpha)}
=\frac1{\alpha_{k}}\begin{bmatrix}0&-i\\i&-1
\end{bmatrix}
$$
and
$$(\D_{\beta})^{2}=\widehat\H_{\beta}+\frac{\uno}4\,,
$$
where
$$
\widehat \H_{\beta}=\widehat\H _{\widehat \Pi^{(\beta)}, \widehat \Theta^{(\beta)}}\,,
$$
$$
\ran\big(\widehat \Pi^{(\beta)}\big)=\ran\big(U^{*}\big(\widehat \Pi^{(\beta)}_{1}\oplus\dots\oplus\widehat \Pi^{(\beta)}_{n}\big)U\big)\,,\quad
\widehat \Theta^{(\beta)}=U^{*}\big(\widehat \Theta^{(\beta)}_{1}\oplus\dots\oplus\widehat \Theta^{(\beta)}_{n}\big)U\,,
$$
$$
\ran(\widehat\Pi^{(\beta)}_{k})=\ran(M_{2}(\Pi^{(\beta)}_{k}\oplus\Pi^{(\beta)}_{k}))
=\{0\}\oplus\CO\oplus\CO\oplus\{0\}\equiv\CO^{2}\,,
$$
$$
\widehat\Theta_{k}^{(\beta)}=M_{1}(\Theta^{(\beta)}_{k}\oplus\Theta^{(\beta)}_{k})M_{2}^{-1}:\CO^{2}\to\CO^{2}\,,\quad \widehat\Theta_{k}^{(\beta)}
=\frac1{\beta_{k}}\begin{bmatrix}1&i\\-i&0
\end{bmatrix}.
$$
Hence,
\begin{align*}
&\dom(\widehat\H_{\alpha})=\{\Psi\equiv(\psi_{1},\psi_{2})\in H^{2}(\RE\backslash Y):
[\psi_{1}]_{y_{k}}=[\psi'_{2}]_{y_{k}}=0,\ \\
&[\psi'_{1}]_{y_{k}}=\alpha_{k}(\psi_{1}(y_{k})-i\psi'_{2}(y_{k})),\ 
[\psi_{2}]_{y_{k}}=-i\alpha_{k}\psi_{1}(y_{k}),\ 1\le k\le n\}\,,
\end{align*}
and
\begin{align*}
&\dom(\widehat\H_{\beta})=\{\Psi\equiv(\psi_{1},\psi_{2})\in H^{2}(\RE\backslash Y):
[\psi'_{1}]_{y_{k}}=[\psi_{2}]_{y_{k}}=0,\ \\
&[\psi_{1}]_{y_{k}}=-i\beta_{k}\psi_{2}(y_{k}),\ 
[\psi'_{2}]_{y_{k}}=-\beta_{k}(\psi_{2}(y_{k})+i\psi'_{1}(y_{k})),\ 1\le k\le n\}\,.
\end{align*}
\subsection{Separating boundary conditions}\label{SBC}
Let $n=1$, and $\Pi=\uno$. By $2\tau\Psi=\Psi(y^{-})+\Psi(y^{+})$ and $\rho\Psi=i\sigma_{1}(\Psi(y_{+})-\Psi(y^{+}))$, the boundary condition $\tau\Psi=\Theta\rho\Psi$ 
rewrites as
$$
(2i\Theta\sigma_{1}+\uno)\Psi(y^{-})=(2i\Theta\sigma_{1}-\uno)\Psi(y^{+})\,.
$$ 
If $\Theta$ is such that
	\be\label{range}
	\ran(2i\Theta\sigma_{1}+\uno)\cap \ran(2i\Theta\sigma_{1}-\uno)=\{0\}\,,
	\ee
	then $\tau\Psi=\Theta\rho\Psi$ is equivalent to the separating boundary conditions
	\begin{align}
	&(2i\Theta\sigma_{1}+\uno)\Psi(y^{-})=0\label{sep-}
	\\&(2i\Theta\sigma_{1}-\uno)\Psi(y^{+})=0\label{sep+}\,.
	\end{align}
By the equivalence of \eqref{range} with
	\be\label{det}
	\det(2i\sigma_{1}\Theta-\uno)=0\,, 
	\ee
one gets that \eqref{range} holds if and only if
          \be\label{Tsep}
	\Theta=\Theta_{\omega,\alpha,\beta}:=
	\frac12\begin{bmatrix}\alpha&i\omega\sqrt{1+\alpha\beta}\ \\	-i\omega\sqrt{1+\alpha\beta}&\beta\end{bmatrix},\qquad
	\omega\in\{-1,+1\}\,,\ \alpha,\beta\in\RE\,,\ \alpha\beta\ge -1\,.
	\ee
For such a $\Theta$,  the boundary conditions \eqref{sep-}, \eqref{sep+} can be rewriten, whenever $\Psi\equiv(\psi_{1},\psi_{2})$, as
		
\be\label{eta-}
\psi_{2}(y^{-})=i\eta^{-}_{\omega,\alpha,\beta}\,\psi_{1}(y^{-})\,,		
\ee
\be\label{eta+}
\psi_{2}(y^{+})=i\eta^{+}_{\omega,\alpha,\beta}\,\psi_{1}(y^{+})\,,		
\ee
where
$$
\eta^{\pm}_{\omega,\alpha,\beta}:=\begin{cases}
-{\alpha^{-1}}(\omega\sqrt{1+\alpha\beta}\pm 1\,)\equiv{-\beta}{(\omega\sqrt{1+\alpha\beta}\mp 1\,)^{-1}}& \alpha\not=0\,,\ \omega\sqrt{1+\alpha\beta}\mp 1\not=0
\\
\mp\,2\alpha^{-1}&\alpha\not=0\,,\ \beta=0\,,\ \omega=\pm 1\\
\pm\,2^{-1}\beta&\alpha=0\,,\ \omega=\mp 1\\
\infty& \text{otherwise}
		\end{cases}
		$$
and the boundary condition $\psi_{2}(y^{\pm})=i\infty\psi_{1}(y^{\pm})$ is to be understood as $\psi_{1}(y^{\pm})=0$.\par
Then	
	$$
	\D_{\omega,\alpha,\beta}=\D^{-}_{\omega,\alpha,\beta}\oplus\D_{\omega,\alpha,\beta}^{+}\,,
	$$
	where $\D_{\omega,\alpha,\beta}:=\D_{\Theta_{\omega,\alpha,\beta}}$ and the self-adjoint operators $\D_{\omega,\alpha,\beta}^{-}$ and $\D^{+}_{\omega,\alpha,\beta}$ denote the Dirac operators in $L^{2}((-\infty,y);\CO^{2})$ and $L^{2}((y,+\infty);\CO^{2})$, with boundary conditions \eqref{sep-} and \eqref{sep+} (equivalently, \eqref{eta-} and \eqref{eta+}) respectively; let us remark that separating boundary conditions of the kind \eqref{sep-}, \eqref{sep+} (resp. \eqref{eta-}, \eqref{eta+}) 
	already appeared in \cite[Prop. 2.2]{HT} (resp. in \cite[Rem. 3.2]{BHT}).\par
Rewriting the boundary condition $\widehat\tau\Psi=\widehat\Theta_{\omega,\alpha,\beta}(\widehat\rho\uno)\Psi$ 
as
$$
\big(2i\widehat\Theta_{\omega,\alpha,\beta}(\sigma_{2}\oplus\sigma_{2})+\uno\big)\widehat\Psi(y^{-})=
\big(2i\widehat\Theta_{\omega,\alpha,\beta}(\sigma_{2}\oplus\sigma_{2})-\uno\big)\widehat\Psi(y^{+})\,,
$$ 	
where $\widehat\Psi\equiv(\psi_{1},\psi'_{1},\psi_{2},\psi'_{2})$ and $\widehat\Theta_{\omega,\alpha,\beta}$ is defined 
by \eqref{wT1}, i.e., 
$$
	\widehat\Theta_{\omega,\alpha,\beta} = \begin{bmatrix} 0 & \omega\sqrt{1+\alpha\beta} & 0 & -i\alpha \\ \omega\sqrt{1+\alpha\beta} & \beta & i\beta & 0 \\ 0 & -i\beta & 0 & -\omega\sqrt{1+\alpha\beta} \\i\alpha& 0 & -\omega\sqrt{1+\alpha\beta} & -\alpha\end{bmatrix}\,,
$$
one can check that $$
\det\big(2i(\sigma_{2}\oplus\sigma_{2})\widehat\Theta_{\omega,\alpha,\beta}-\uno\big)=0
$$
and so, proceeding as above, 
$$ 
\ran\big(2i\widehat\Theta_{\omega,\alpha,\beta}(\sigma_{2}\oplus\sigma_{2})+\uno\big)\cap
\ran\big(2i\widehat\Theta_{\omega,\alpha,\beta}(\sigma_{2}\oplus\sigma_{2})-\uno\big)=\{0\}\,.
$$
Thus the separating boundary conditions 
\begin{align}
&\big(2i\widehat\Theta_{\omega,\alpha,\beta}(\sigma_{2}\oplus\sigma_{2})+\uno\big)\widehat\Psi(y^{-})=0\label{S-sep-}\\
&\big(2i\widehat\Theta_{\omega,\alpha,\beta}(\sigma_{2}\oplus\sigma_{2})-\uno\big)\widehat\Psi(y^{+})=0\label{S-sep+}
\end{align}
hold for $\widehat \H_{\omega,\alpha,\beta}:=\widehat \H_{\widehat\Theta_{\omega,\alpha,\beta}}$ and 
$$
\widehat \H_{\omega,\alpha,\beta}=\widehat \H_{\omega,\alpha,\beta}^{-}\oplus\widehat \H_{\omega,\alpha,\beta}^{+}\,,
$$
where the self-adjoint operators $\widehat\H_{\omega,\alpha,\beta}^{-}$ and $\widehat\H_{\omega,\alpha,\beta}^{+}$ denote the Schr\"odinger operator $-\frac{d^{2}}{dx^{2}}\,\uno$ in $L^{2}((-\infty,y);\CO^{2})$ and $L^{2}((y,+\infty);\CO^{2})$, with boundary conditions \eqref{S-sep-} and \eqref{S-sep+} respectively.
Furthermore,
$$
(\D_{\omega,\alpha,\beta}^{\pm})^{2}=\widehat \H_{\omega,\alpha,\beta}^{\pm}+\frac{\uno}4\,.
$$
By \eqref{eta-}, \eqref{eta+} and by Remark \ref{equi}, the separating boundary conditions \eqref{S-sep-}, \eqref{S-sep+} for $\widehat \H_{\omega,\alpha,\beta}^{\pm}$ rewrite, whenever $\Psi\equiv(\psi_{1},\psi_{2})$, as
$$
\psi_{2}(y^{\pm})=i\eta^{\pm}_{\omega,\alpha,\beta}\,\psi_{1}(y^{\pm})\,,\qquad 		
i\eta^{\pm}_{\omega,\alpha,\beta}\,\psi'_{2}(y^{\pm})=\psi'_{1}(y^{\pm})+\eta^{\pm}_{\omega,\alpha,\beta}\,\psi_{1}(y^{\pm})\,.		
$$
In the case $n=1$, $\Pi\not=\uno$,   
the boundary conditions in $\dom(\D_{\Pi,\Theta})$
give 
\be\label{notloc}
(\Pi-\uno)\sigma_{1}\Psi(y^{-})=(\Pi-\uno)\sigma_{1}\Psi(y^{+})\,,\qquad
\Pi(2i\theta\sigma_{1}+\uno)\Psi(y^{-})=\Pi(2i\theta\sigma_{1}-\uno)\Psi(y^{+})\,.
\ee
Since $\ran((\Pi-\uno)\sigma_{1})=\ran(\Pi-\uno)$ and, by $\det(2i\theta\sigma_{1}\pm\uno)=1+4\theta^{2}\not=0$, 
$\ran(\Pi(2i\theta\sigma_{1}\pm\uno))=\ran(\Pi)$, the relations \eqref{notloc} do not allow any separating boundary conditions. \par
By the $n=1$ case, one immediately gets the family of separating and local boundary conditions: it suffices to take 
$\Theta_{\underline\omega,\underline\alpha,\underline\beta}:=\Theta_{\omega_{1},\alpha_{1},\beta_{1}}\oplus\dots\oplus\Theta_{\omega_{n},\alpha_{n},\beta_{n}}$. 
Then, using the abbreviated notations $\D_{\underline\omega,\underline\alpha,\underline\beta}\equiv
\D_{\Theta_{\underline\omega,\underline\alpha,\underline\beta}}$ and 
$\widehat\H_{\underline\omega,\underline\alpha,\underline\beta}\equiv
\widehat\H_{\widehat\Theta_{\underline\omega,\underline\alpha,\underline\beta}}$, where $\widehat\Theta_{\underline\omega,\underline\alpha,\underline\beta}:=U^{*}\big(\widehat\Theta_{\omega_{1},\alpha_{1},\beta_{1}}\oplus\dots\oplus\widehat\Theta_{\omega_{n},\alpha_{n},\beta_{n}}\big)U$ and $\widehat\Theta_{\omega_{k},\alpha_{k},\beta_{k}}$ is defined 
by \eqref{wT1}, i.e., $\widehat\Theta_{\omega_{k},\alpha_{k},\beta_{k}}:=M_{1}(\Theta_{\omega_{k},\alpha_{k},\beta_{k}}\oplus \Theta_{\omega,\alpha,\beta})M_{2}^{-1}$,
one obtains 
$$
\D_{{\underline\omega},\underline\alpha,\underline\beta}=\D^{-}_{\omega_{1},\alpha_{1},\beta_{1}}\oplus\D_{\omega_{1,2},\alpha_{1,2},\beta_{1,2}}\oplus \D_{\omega_{2,3},\alpha_{2,3},\beta_{2,3}}\oplus\dots\oplus\D_{\omega_{n-1,n},\alpha_{n-1,n},\beta_{n-1,n}}\oplus
\D^{+}_{\omega_{n},\alpha_{n},\beta_{n}}
$$
and
$$
\widehat\H_{{\underline\omega},\underline\alpha,\underline\beta}=\widehat\H^{-}_{\omega_{1},\alpha_{1},\beta_{1}}\oplus\widehat\H_{\omega_{1,2},\alpha_{1,2},\beta_{1,2}}\oplus \widehat\H_{\omega_{2,3},\alpha_{2,3},\beta_{2,3}}\oplus\dots\oplus\widehat\H_{\omega_{n-1,n},\alpha_{n-1,n},\beta_{n-1,n}}\oplus
\widehat\H^{+}_{\omega_{n},\alpha_{n},\beta_{n}}\,.
$$
Here $\D_{\omega_{k-1,k},\alpha_{k-1,k},\beta_{k-1,k}}$ denotes the self-adjoint Dirac operator in 
$L^{2}((y_{k-1}, y_{k});\CO^{2})$ with boundary conditions of the kind \eqref{sep+} at $y_{k-1}$ (with parameters $\omega_{k-1},\alpha_{k-1},\beta_{k-1}$) and of the kind \eqref{sep-} at $y_{k}$ (with parameters $\omega_{k},\alpha_{k},\beta_{k}$); $\widehat\H_{\omega_{k-1,k},\alpha_{k-1,k},\beta_{k-1,k}}$ is defined in a similar way, using the boundary conditions \eqref{S-sep-} and \eqref{S-sep+}. Furthermore, 
$$
(\D_{\omega_{k-1,k},\alpha_{k-1,k},\beta_{k-1,k}})^{2}=\widehat\H_{\omega_{k-1,k},\alpha_{k-1,k},\beta_{k-1,k}}+\frac{\uno}4\,,\qquad 1\le k\le n\,.
$$ 
	\subsection{Supersymmetry}\label{SUSY}
Since 
$$
\sigma_{1}\sigma_{2}+\sigma_{2}\sigma_{1}=0=\sigma_{3}\sigma_{2}+\sigma_{2}\sigma_{3}\,,
$$ 
one has
$$
\sigma_{2}\D_{\RE \backslash Y}+\D_{\RE \backslash Y}\sigma_{2}=0\,.
$$
Therefore, if $(\Pi,\Theta)$ is such that
\be\label{susy}
\begin{cases}\rho\Psi\in\ran(\Pi)\\
\Pi\tau\Psi=\Theta\rho\Psi
\end{cases}
\quad\Longrightarrow\quad
\begin{cases}\rho\sigma_{2}\Psi\in\ran(\Pi)\\
\Pi\tau\sigma_{2}\Psi=\Theta\rho\sigma_{2}\Psi\,,
\end{cases}
\ee
then $\sigma_{2}$ anti-commutes with $\D_{\Pi,\Theta}$ and so, by \eqref{D2=H}, the system 
\be\label{sup-syst}
\left(\widehat\H_{\widehat\Pi,\widehat\Theta}+\frac{\uno}4,\sigma_{2},\D_{\Pi,\Theta}\right)
\ee 
has supersymmetry (see, e.g., \cite[Chapter 1]{Arai}, \cite[Section 6.3]{Simon}).\par
By
$$
\langle\sigma_{2}\Psi\rangle_{y}=\sigma_{2}\langle\Psi\rangle_{y}\,,\qquad
[\sigma_{2}\Psi]_{y}=\sigma_{2}[\Psi]_{y}\,,
$$
and by $\sigma_{1}\sigma_{2}=-\sigma_{2}\sigma_{1}$,
one gets
$$
\tau\sigma_{2}\Psi=\sigma_{2}^{\oplus}\tau\Psi\,,\qquad\rho\sigma_{2}\Psi=-\sigma_{2}^{\oplus}\rho\Psi\,,\qquad
 \sigma_{2}^{\oplus}:=\sigma_{2}\oplus\dots\oplus\sigma_{2}\,.
$$
Therefore, \eqref{susy} holds whenever
\be\label{susy2}
\Pi\sigma^{\oplus}_{2}-\sigma^{\oplus}_{2}\Pi=0=\Theta\sigma^{\oplus}_{2}
+\sigma_{2}^{\oplus}\Theta\,.
\ee
Given a couple $(\Pi,\Theta)$ which satisfies \eqref{susy2}, let us further suppose that 
\be\label{susy-det}
\det(\Theta+\Pi\tau G_{0}\Pi)\not=0\,.
\ee
Then, by \eqref{detD}, zero is not an eigenvalue of $\D_{\Pi,\Theta}$, i.e., the system \eqref{sup-syst} has no supersymmetric state and there is a spontaneous supersymmetry breaking (see, e.g., \cite[Section 1.8]{Arai}).  
In the case $n=1$, the solutions of \eqref{susy2} are found immediately. \par If $\Pi\not=\uno$, then $\Pi=\Pi_{\pm}:=|v_{\pm}\rangle\langle v_{\pm}|$ and $\Theta=0$, where $v_{\pm}$, $|v_{\pm}|=1$, solves $\sigma_{2}v_{\pm}=\pm v_{\pm}$. \par If $\Pi=\uno$, then $\Theta=\Theta_{a,b}:=b\sigma_{1}+a\sigma_{3}$, $a,b\in\RE$. \par
Since, by \eqref{Djk}, $\tau G_{0}=-\frac12\,\sigma_{3}$, $\det(\Theta_{a,b}+\tau G_{0})=0$ if and only if $b=0$ and $a=\frac12$. Therefore, for any $(a,b)\in\RE^{2}\backslash \{(\frac12,0)\}$ the system \eqref{sup-syst} with $\Pi=\uno$ and $\Theta=\Theta_{a,b}$ has no supersymmetric state and there is a spontaneous supersymmetry breaking. 
\par
Notice that once the solution of \eqref{susy2} in known in the $n=1$ case, then the set of solutions for the case of $n>1$ local point interactions is readily obtained: $\Pi=\Pi_{1}\oplus\dots\oplus\Pi_{n}$ and $\Theta=\Theta_{1}\oplus\dots\oplus\Theta_{n}$, where $(\Pi_{k},\Theta_{k})$ is equal either to $(\Pi_{\pm},\zero)$ or to $(\uno,\Theta_{a_{k},b_{k}})$.

\subsection{Quantum Graphs.}\label{QG} Since $\D_{\Pi,\Theta}$ is a generic self-adjoint extension of $S=\D|C^{\infty}_{comp}(\RE\backslash Y;\CO^{2})=(\D|C^{\infty}_{comp}(I_{0};\CO^{2}))\oplus\cdots\oplus (\D|C^{\infty}_{comp}(I_{n};\CO^{2}))$, 
the  nonlocal extensions of $S$ provide  the self-adjoint realizations of the Dirac operator on a quantum graph with the two ends $\overline I_{0}=(-\infty,y_{1}]$ and $\overline I_{n}=[y_{n},+\infty)$ and the $(n-1)$ edges $\overline I_{1}=[y_{1},y_{2}],\dots,\overline I_{n-1}=[y_{n-1},y_{n}] $; the boundary conditions corresponding to the couple $(\Pi,\Theta)$ specify the connectivity of the graph. The case of a compact graph can be obtained by imposing separating boundary conditions at the two ends. Likewise,   the  nonlocal extensions of $\H\uno|C^{\infty}_{comp}(\RE\backslash Y;\CO^{2})$ provide  self-adjoint realizations of the Schr\"odinger operator on a quantum graph with two ends and $(n-1)$ edges. For an introduction to the theory of quantum graphs we refer to the book \cite{BK} and the many references there; however, let us point out that our way of building the self-adjoint realizations on the graph is not the standard one.  
\par
As an explicit example, let us consider the Dirac operator on the eye graph (see \cite[Section III.D]{bulla}). Therefore, we choose the subclass of boundary conditions for the Dirac operator in $L^{2}(G;\CO^{2})$, $G=(-\infty, y_{1}]\sqcup[y_{1},y_{2}]\sqcup[y_{2},y_{3}]\sqcup[y_{3},+\infty)$, 
connecting $\Psi(y_{1}^{-})$ with both $\Psi(y_{1}^{+})$ and $\Psi(y_{2}^{+})$ and 
connecting $\Psi(y_{3}^{+})$ with both $\Psi(y_{2}^{-})$ and $\Psi(y_{3}^{-})$. Such kind of boundary conditions give to $G$ the topology of a circle with two ends. 
$$
\begin{tikzpicture}[main/.style = {draw, circle}] 
		\fill   (0,0) circle[radius=2pt] node [above,font=\small ,xshift=-2mm] {$y_1$};
		\fill   (1,1) circle[radius=2pt] node [above,font=\small ,xshift=2mm] {$y_1$};
		\fill   (3,1) circle[radius=2pt] node [above,font=\small ,xshift=2mm] {$y_2$};
		\fill   (4,0) circle[radius=2pt] node [above,font=\small ,xshift=2mm] {$y_3$};
		\fill   (3,-1) circle[radius=2pt] node [above,font=\small ,xshift=-2mm] {$y_3$};
		\fill   (1,-1) circle[radius=2pt] node [above,font=\small ,xshift=2mm] {$y_2$};
		\draw (0,0) -- +(-1.5,0);
		\draw (1,1) -- (3,1);
		\draw (1,-1) -- (3,-1);
		\draw (4,0) -- +(1.5,0);
		\draw[dashed]  (1,1) -- (0,0);
		\draw[dashed]  (3,1) -- (4,0);
		\draw[dashed]  (3,-1) -- (4,0);
		\draw[dashed]  (1,-1) -- (0,0);  
\end{tikzpicture}
\qquad 
\begin{tikzpicture}
	\fill   (0,0) circle[radius=0pt] node [above,font=\large ,yshift=9mm] {$\approx$};
\end{tikzpicture}
\qquad 
\begin{tikzpicture}
	\draw   (0,0) circle[radius=10mm] ;
	\fill   (1,0) circle[radius=2pt];
	\fill   (-1,0) circle[radius=2pt];
	\draw (1,0) node[circle]{} -- +(1.5,0);
	\draw (-1,0) node[circle]{} -- +(-1.5,0);
\end{tikzpicture}
$$
Furthermore, we restrict to Kirchhoff-type boundary conditions, meaning that we select the ones which, in the non relativistic limit, correspond to Kirchhoff (or free) boundary conditions for the Schr\"odinger operator in $L^{2}(G)$ (see \cite{BCT1}, \cite{BCT2}, \cite{BCT3}). Therefore, we require, for $\Psi\equiv(\psi_{1},\psi_{2})$ in the self-adjointness domain,
\be\label{Kirch}
\begin{cases}
\psi_{1}(y_{1}^{-})=\psi_{1}(y_{1}^{+})=\psi_{1}(y_{2}^{+})&\\ 
\psi_{2}(y_{1}^{-})-\psi_{2}(y_{1}^{+})-\psi_{2}(y_{2}^{+})=0&\\
\psi_{1}(y_{2}^{-})=\psi_{1}(y_{3}^{-})=\psi_{1}(y_{3}^{+})&\\
\psi_{2}(y_{2}^{-})+\psi_{2}(y_{3}^{-})-\psi_{2}(y_{3}^{+})=0\,.&
\end{cases}
\ee
These boundary conditions rewrite as 
\be\label{Gbc}
\begin{cases}
[\psi_{1}]_{y_{1}}=[\psi_{1}]_{y_{3}}=0&\\
[\psi_{2}]_{y_{1}}=-\psi_{2}(y_{2}^{+})&\\
[\psi_{2}]_{y_{3}}=\psi_{2}(y_{2}^{-})&\\
\langle\psi_{1}\rangle_{y_{1}}=\psi_{1}(y_{2}^{+})&\\
\langle\psi_{1}\rangle_{y_{3}}=\psi_{1}(y_{2}^{-})&
\end{cases}
\equiv\quad
\begin{cases}
[\psi_{1}]_{y_{1}}=[\psi_{1}]_{y_{3}}=0&\\
[\psi_{2}]_{y_{1}}+[\psi_{2}]_{y_{2}}+[\psi_{2}]_{y_{3}}=0&\\
\langle\psi_{1}\rangle_{y_{1}}-\langle\psi_{1}\rangle_{y_{3}}=[\psi_{1}]_{y_{2}}&\\
\langle\psi_{1}\rangle_{y_{1}}-2\langle\psi_{1}\rangle_{y_{2}}+\langle\psi_{1}\rangle_{y_{3}}=0&\\
2\langle\psi_{2}\rangle_{y_{2}}=[\psi_{2}]_{y_{3}}-[\psi_{2}]_{y_{1}}\,.&
\end{cases}
\ee
The relations $[\psi_{1}]_{y_{1}}=[\psi_{1}]_{y_{3}}=[\psi_{2}]_{y_{1}}+[\psi_{2}]_{y_{2}}+[\psi_{2}]_{y_{3}}=0$ in \eqref{Gbc} coincide with $\rho\Psi\in\ran(\Pi)$, where the orthogonal projector $\Pi:\CO^{6}\to\CO^{6}$ is represented by the matrix
\be\label{PG}
\Pi=
\frac13
\begin{bmatrix}
2&0&-1&0&-1&0\\
0&0&0&0&0&0\\
-1&0&2&0&-1&0\\
0&0&0&3&0&0\\
-1&0&-1&0&2&0\\
0&0&0&0&0&0
\end{bmatrix}.
\ee
Then, one can easily check that the other relations in \eqref{Gbc} are equivalent to $\Pi\tau\Psi=\Theta\rho\Psi$ whenever $\Theta$, a symmetric operator in the $3$-dimensional subspace $\ran(\Pi)$, is represented, as a symmetric linear operator in $\CO^{6}$ preserving $\ran(\Pi)$, by the Hermitian  matrix
\be\label{TG}
\Theta=\frac{i}2\begin{bmatrix}
0&0&0&-1&0&0\\
0&0&0&0&0&0\\
0&0&0&0&0&0\\
1&0&0&0&-1&0\\
0&0&0&1&0&0\\
0&0&0&0&0&0
\end{bmatrix}.
\ee
Therefore, using $L^{2}(G;\CO^{2})\equiv L^{2}(\RE;\CO^{2})$, one gets $\D_{\rm K}\equiv \D_{\Pi,\Theta}$, where $\Pi$ and $\Theta$ are as in \eqref{PG} and \eqref{TG} respectively and where $\D_{\rm K}$ denotes the Dirac operator on the eye graph with the Kirchhoff boundary conditions at the two vertices.
Then,  by  Remark \ref{equi2} and by \eqref{Kirch},  the Schr\"odinger operator $\widehat\H_{\widehat\Pi,\widehat\Theta}$ satisfies both the boundary conditions ${\rm K}$ and ${\rm K}_{*}$, where 
\be\label{bcGS}
{\rm K}\equiv\begin{cases}
\psi_{1}(y_{1}^{-})=\psi_{1}(y_{1}^{+})=\psi_{1}(y_{2}^{+})&\\ 
\psi_{1}'(y_{1}^{-})-\psi_{1}'(y_{1}^{+})-\psi_{1}'(y_{2}^{+})=0&\\
\psi_{1}(y_{2}^{-})=\psi_{1}(y_{3}^{-})=\psi_{1}(y_{3}^{+})&\\
\psi_{1}'(y_{2}^{-})+\psi_{1}'(y_{3}^{-})-\psi_{1}'(y_{3}^{+})=0&
\end{cases}
\quad
{\rm K}_{*}\equiv\begin{cases}
\psi_{2}'(y_{1}^{-})=\psi_{2}'(y_{1}^{+})=\psi_{2}'(y_{2}^{+})&\\ 
\psi_{2}(y_{1}^{-})-\psi_{2}(y_{1}^{+})-\psi_{2}(y_{2}^{+})=0&\\
\psi_{2}'(y_{2}^{-})=\psi_{2}'(y_{3}^{-})=\psi_{2}'(y_{3}^{+})&\\
\psi_{2}(y_{2}^{-})+\psi_{2}(y_{3}^{-})-\psi_{2}(y_{3}^{+})=0\,.&
\end{cases}
\ee
This gives 
\be\label{DHG}
(\D_{\rm K})^{2}=\left(\H_{\rm K}+\frac14\right)\oplus\left(\H_{{\rm K}_{*}}+\frac14\right)\,,
\ee
where $\H_{\rm K}$ is the Schr\"odinger  operator in $L^{2}(G)$ with the boundary conditions ${\rm K}$ and $\H_{{\rm K}_{*}}$ is the Schr\"odinger  operator in $L^{2}(G)$ with the  boundary conditions ${\rm K}_{*}$. The boundary conditions ${\rm K}$ coincide with the usual Kirchhoff ones (see \cite[eq. (1.4.4)]{BK}) while the boundary conditions ${\rm K}_{*}$, are as sort of reversed Kirchhoff ones (named ''homogeneous $\delta'$ vertex conditions'' in \cite{BCT2}) given by the exchange $\psi\leftrightarrow\psi'$. The boundary conditions ${\rm K_{*}}$, like the ${\rm K}$ ones, give, in the case of the real line, the free Schr\"odinger operator; thus, \eqref{DHG} is consistent with \eqref{Rel}. Furthermore, the Schr\"odinger operator $\H_{\rm K}\oplus\H_{{\rm K}_{*}}$ appears in the nonrelativistic limit of $\D_{K}$, see \cite[Proposition 1.3]{BCT2}.\par
The arguments in the previous example extend to any graph: by Remark \ref{equi2}, to the Kirchhoff-type boundary conditions for the Dirac operator $\D_{\rm K}$ on the graph, i.e.,  to 
$$
\begin{cases}
\text{$\psi_{1}$ continuous at any vertex $v$}&\\
\sum_{v}^{\pm}\psi_{2}(v)=0\ \text{for any vertex $v$,}&
\end{cases}
$$
correspond, for the  Schr\"odinger operators $\H_{\rm K}$ and $\H_{{\rm K}_{*}}$ such that \eqref{DHG} holds,  the 
 boundary conditions
$$
{\rm K}\equiv\begin{cases}
\text{$\psi$ continuous at any vertex $v$}&\\
\sum_{v}^{\pm}\psi'(v)=0\,\ \text{for any vertex $v$}&
\end{cases}
\quad
{\rm K}_{*}\equiv\begin{cases}
\text{$\psi'$ continuous an any vertex $v$}&\\
\sum_{v}^{\pm}\psi(v)=0\,\ \text{for any vertex $v$}.&
\end{cases}
$$
Here, $\sum_{v}^{\pm}f(v)$ means the sum over all the points $y_{k}\in Y$ corresponding to the vertex $v$ with the sign convention 
$$
f(v):=\begin{cases}
-f(y_{k}^{+})&\text{$y_{k}$ is at the left end of the interval/half-line}\\
+f(y_{k}^{-})&\text{$y_{k}$ is at the right end of the interval/half-line.}
\end{cases}
$$ 
\vskip 10pt\noindent
{\bf Acknowledgements.} We thank an anonymous referee for the accurate reading and for the useful remarks.


\begin{thebibliography}{99}


\bibitem{AGHKH} S. Albeverio, F. Gesztesy, R. Hoegh-Krohn, H. Holden:
{\it Solvable Models in Quantum Mechanics}, 2nd Edition, AMS Chelsea Publ., 2005.

\bibitem{AKK} S. Albeverio, W. Karwowski, V. Koshmanenko: Square powers of singularly perturbed operators. {\it Math. Nachr.} {\bf 173} (1995), 5-24.

\bibitem{ASSW} M.H. Al-Hashimi, M. Salman, A. Shalaby, U.-J. Wiese: Supersymmetric descendants of self-adjointly extended quantum mechanical Hamiltonians. {\it Ann. Physics} {\bf 337} (2013), 1-24. 

\bibitem{Arai} A. Arai: {\it Infinite-dimensional Dirac Operators and Supersymmetric Quantum Fields.} Springer, 2022.
 
\bibitem {ArMaVe 1}N. Arrizabalaga, A. Mas, L. Vega: Shell interactions for
Dirac operators. \emph{J. Math. Pures Appl.} (9) \textbf{102}(4), 617-639, 2014.

\bibitem {BEHL19}J. Behrndt, P. Exner, M. Holzmann, V. Lotoreichik: On Dirac operators in $\mathbb{R}^3$ with electrostatic and 
Lorentz scalar $\delta$-shell interactions. \emph{Quantum Stud. Math. Found.}, \textbf{6}, 295-314, 2019.

\bibitem{BHT} J. Behrndt, M. Holzmann, M. Tu\v sek: Two-dimensional Dirac operators with general $\delta$-shell interactions supported on a straight line. {\it J. Phys. A: Math. Theor.} {\bf 56}, 045201, 2023.

\bibitem{BLL} J. Behrndt, M. Langer, V. Lotoreichik:
Schr\"odinger operators with $\delta$ and $\delta'$-potentials supported on hypersurfaces. {\it 
Ann. Henri Poincar\'e} {\bf 14} (2013), 385-423.

\bibitem{BD} S. Benvegn\`u, L. D\c{a}browski: Relativistic point interaction,
\emph{Lett. in Math. Physics}, \textbf{30} (1994), 159-167.

\bibitem{BK} G. Berkolaiko, P. Kuchment: {\it Introduction to Quantum Graphs.}
American Mathematical Society, 2013. 

\bibitem{BCT1} W. Borrelli, R. Carlone, L. Tentarelli: Nonlinear Dirac equation on graphs with localized nonlinearities: bound states and nonrelativistic limit. {\it SIAM J. Math. Anal.} {\bf 51} (2019), 
1046-1081.

\bibitem{BCT2} W. Borrelli, R. Carlone, L. Tentarelli: On the nonlinear Dirac equation on noncompact metric graphs. {\it J. Differential Equations} {\bf 278} (2021), 326-357.  

\bibitem{BCT3} W. Borrelli, R. Carlone, L. Tentarelli: A note on the Dirac operator with Kirchhoff-type vertex conditions on metric graphs. In A. Michelangeli (ed.): {\it Mathematical challenges of zero-range physics - models, methods, rigorous results, open problems}, 81-104, Springer, 2021.

\bibitem{bulla} W. Bulla, T. Trenkler: The free Dirac operator on compact and noncompact graphs.
{\it J. Math. Phys.} {\bf 31} (1990), 1157-1163.

\bibitem{CMP} R. Carlone, M. Malamud, A. Posilicano: 
On the spectral theory of Gesztesy-\v{S}eba realizations of 1-D Dirac operators with point interactions on a discrete set. {\it J. Differential Equations} {\bf 254} (2013), 3835-3902.
 
\bibitem{CLMT} B. Cassano, V. Lotoreichik, A. Mas, M. Tu\v sek: General $\delta$-shell interactions for the two-dimensional Dirac operator: self-adjointness and approximation. {\it Rev. Mat. Iberoam.} 2022, doi 10.4171/RMI/1354. 
 
\bibitem{Simon} H. L. Cycon, R. G.  Froese, W. Kirsch, B. Simon:
{\it Schr\"odinger operators with application to quantum mechanics and global geometry.}
Springer-Verlag, 1987.

 \bibitem{DG} L. D\c{a}browski, H. Grosse: On nonlocal point interactions in one, two, and three dimensions. {\it J. Math. Phys.} {\bf 26} (1985), 2777-2780. 

\bibitem{GS} F. Gesztesy, P. \v{S}eba: New analytically solvable models of relativistic point interactions, {\it Lett. Math. Phys.} {\bf 13} (1987), 345-358.

\bibitem{HT} L. Heriban, M. Tu\v sek: Non-self-adjoint relativistic point interaction in one dimension, {\it J. Math. Anal. Appl.} {\bf 516} (2022), 126536

\bibitem{KM} A.S. Kostenko, M.M. Malamud: 1-D Schr\"odinger operators with local point interactions on a discrete set, {\it J. Differential
Equations} {\bf 249} (2010), 253-304.

\bibitem{JDE16}A. Mantile, A. Posilicano, M. Sini: Self-adjoint elliptic
operators with boundary conditions on not closed hypersurfaces. \emph{J.
Differential Equations} \textbf{261} (2016) , 1-55.

\bibitem{Pank} K. Pankrashkin: Resolvents of self-adjoint extensions with mixed boundary conditions. {\it Rep. Math. Phys.} {\bf 58} (2006), 207-221.
 
\bibitem {JFA} A. Posilicano: A Kre\u{\i}n-like formula for singular
perturbations of self-adjoint operators and applications. \emph{J. Funct.
Anal.}, \textbf{183} (2001), 109-147.

\bibitem {O&M}A. Posilicano: Self-adjoint extensions of restrictions,
\emph{Oper. Matrices}, \textbf{2} (2008), 483-506.

\bibitem{UT} T. Uchino, I. Tsutsui: Supersymmetric quantum mechanics under point singularities. {\it J. Phys. A} {\bf 36} (2003),, 6821-6846.
 
\end{thebibliography}
\end{document}